\newtheorem{theorem}{Theorem}
\newtheorem{lemma}{Lemma}
\newtheorem{remark}{Remark}
\newtheorem{assumption}{Assumption}
\begin{document}

\title{Achieving Precisely-Assigned Performance Requirements for Spacecraft Attitude Control
}

\author{Jiakun Lei
	\thanks{Jiakun Lei, Ph.D. Candidate, School of Aeronautics and Astronautics, Zhejiang University, Hangzhou, China, 310027, Email:leijiakun@zju.edu.cn}
	
	
	
%
}

\maketitle

\begin{abstract} 
This paper investigates the attitude control problem of spacecraft, with the objective of achieving precise performance criteria including precise settling time, steady-state error, and overshoot elimination. To tackle this challenge, we propose the Precisely-Assigned Performance (PAP) control scheme. Firstly, we utilize a parameterized function to explicitly characterize a reference for the transient responses, termed the Reference Performance Function (RPF). Subsequently, leveraging the concept of the RPF, we define a performance-satisfied tube region and introduce the concept of control barrier functions to derive a sufficient condition for the state trajectory to converge and remain confined within this tube region. By  introducing the concept of Sontag's universal formula for stabilization, a PAP controller, constructed based on the backstepping method, is then designed to guide the system to satisfy these affine constraint conditions, and a disturbance observer is further integrated to handle perturbations. Theoretical proofs are presented to demonstrate the controller's capability to establish the boundedness of the overall system and ensure that each state trajectory will converge into the performance-satisfied region within a finite time duration under any conditions. Finally, numerical simulation results are presented to validate the effectiveness of the proposed method.
\end{abstract}

	\begin{IEEEkeywords}
Control Barrier Function; Prescribed Performance Control; Spacecraft Attitude Control
\end{IEEEkeywords}

\section{Introduction}
Contemporary space missions, including on-orbit servicing, deep-space exploration, and on-orbit imaging tasks, often require high-performance attitude control. For example, during an on-orbit imaging task, the spacecraft's attitude control system should not only provide a rapid attitude maneuver to align the payload's boresight axis with a given direction but also ensure that the intersection point of the boresight axis with the ground precisely follows specified curves. This goal essentially requires the pointing error to converge just around a specified time instant, while the steady-state error should remain confined within a designated error bound thereafter. These requirements impose intricate restrictions on the closed-loop system's transient response. Motivated by this topic, this paper investigates the problem of spacecraft attitude control with the aim of achieving multiple precisely-assigned performance criteria, including settling time, steady-state error bound, and overshoot elimination.

Regarding the goal of ensuring fast convergence in the attitude error system, the finite-time \cite{hu2017continuous,zou2019finite}, fixed-time \cite{zou2020fixed,huang2018fixed}, and predefined-time stability \cite{sanchez-torres_discontinuous_2014,xie_adaptive_2022,wang2020attitude} techniques are commonly employed. These methods construct special structures for the control law to establish specific convergence dynamics based on Lyapunov certificates, thereby guiding the error states to converge while satisfying time-related requirements.
Despite their efficiency, these methods can accelerate the convergence rate of the error system, but they do not guarantee that the system's steady-state error will be upper-bounded by a given constant.

For the control methodology aimed at achieving specific performance criteria, which involves requirements on both the transient response, such as convergence time, and the steady-state response, such as steady-state error, the Prescribed Performance Control (PPC) scheme has gained significant attention in recent years \cite{bechlioulis_robust_2008,bechlioulis2011robust}. The PPC scheme use performance functions to characterize the desired response. Through the process of homeomorphic error transformation and the design of a control law to stabilize the transformed error states, each state trajectory can be ensured to be confined within a funnel-like performance envelope, enclosed by a pair of performance functions, thereby achieving the desired performance criteria throughout the entire process \cite{wei_2021_overview,bu2023prescribed}.

Owing to its efficiency, the PPC control scheme has been applied in numerous studies to achieve performance-guaranteed control, particularly for spacecraft attitude control \cite{hu_model-free_2021,luo_low-complexity_2018,bu_2018_new,huang_fault-tolerant_2020}. In \cite{wei_learning-based_2019}, the authors devised a dual-layer PPC strategy for the spacecraft formation flying problem. In another study \cite{hu2017adaptive}, the authors introduce an asymmetric performance envelope design and a linear auxiliary variable to theoretically eliminate overshooting. Concurrently, fixed-time, and predefined-time stability techniques have been widely utilized to construct the performance envelope, as referenced in \cite{gao_novel_2021,yin_appointed_2019}.

Although the PPC control scheme effectively achieves performance-guaranteed control, it faces a notable singularity problem, hindering its practical application. As mentioned by \cite{lei2023singularity}, also emphasized in \cite{bu2023prescribed,yong2020flexible}, typical PPC schemes require the invariance of the performance envelope constraint, where error states must be restricted within their corresponding performance envelopes from start to end. However, this constraint is at risk of violation if the system is heavily perturbed. Additionally, limited controller capability can also lead to the occurrence of such a singularity problem, as mentioned and investigated in \cite{yong2020flexible}.
On the other hand, for the problem discussed in this paper, the system must converge with precisely-assigned transient behavior. Achieving such a control objective remains difficult for typical PPC control schemes, as technical barriers arise from mainly two perspectives:

\begin{enumerate}
	\item \textit{Vague Performance Description:} Typical PPC control schemes utilize an "envelope" to bound the error state, thereby preserving unnecessary degrees of freedom in how the error state converges. This approach makes it challenging to precisely characterize the ideal transient response.
	
	\item \textit{Mismatching of Envelope Design and Performance:} While assigning a tight performance envelope seems to provide a precise characterization of the desired transient response, this is challenging for a typical PPC control scheme to do so. Since PPC control schemes rely on the gradient of their envelope bounds to guide the system's error states, this leads to an overly strong repulsion effect when the error state approaches its corresponding performance bound. As a result, the error state may be influenced by the lower and upper bounds repeatedly if the performance envelope becomes too narrow, causing the chattering problem \cite{lei2023singularity,wang2022finite}. Due to this drawback, it is difficult to assign a performance envelope directly according to the given performance requirements, and hence there still lacks an explicit relationship between the envelope design and its actual reachable performance. Such a problem has also been mentioned in \cite{bu2023prescribed}.
	
\end{enumerate}

Recently, the development of Control Barrier Function (CBF) has gained much attention in the field of safety-critical control problems, as stated by much literature \cite{ames2019control,ames2016control,xiao2019control}. The concept of CBF is originated from Nagumo's Forward Invariant Theorem \cite{nagumo1942lage,blanchini1999set,horvath2016invariance}, where the authors provide the necessary and sufficient condition for the system to remain confined in a set by considering the time-derivative of a smooth scalar function. The CBF methods have been combined with optimization techniques to calculate a control input to directly render the safe set invariant, known as CBF-QP \cite{ames2014control}, showing its efficiency in various safety-related control problems, such as robotics control \cite{rauscher2016constrained,ferraguti2022safety}, vehicle control \cite{alan2023control}, and spacecraft control \cite{breeden2021guaranteed,breeden2023robust}.

\textit{Motivation and Contribution Statement: }
For the discussed control problem, due to these mentioned problems, the system is hard to exhibit the desired response as characterized by the performance envelope with the PPC control schemes. Meanwhile, the intrinsic singularity problem often leads to unexpected chattering when coupled with external perturbation issues, threatening the system's stability. Motivated by the concept of CBF, we migrate the concept of "control to ensure safety" into the realm of "control to satisfy performance requirements". In fact, such an idea was previously investigated in \cite{verginis2022funnel}, where the authors utilize the CBF-QP technique to numerically solve a control input for an affine control system. However, the typical design of the funnel does not meet our control objectives due to the vague description of performance criteria. Also, the utilization of Quadratic Programming (QP) increases the computational burden of the system. Consequently, it still remains necessary and challenging to develop a unified and simple control framework to address our control objective. Accordingly, the main contributions of this paper are stated as follows:

\begin{enumerate}
	\item A complete control scheme is presented for the problem under discussion, effectively achieving the precisely assigned performance criteria, referred to as Precisely-Assigned Performance Control (PAP). The PAP control scheme introduces the concept of the Reference Performance Function (RPF), which uses parameterized functions to precisely characterize the desired transient response. By guiding each state trajectory to converge and remain within a "tube" centered around the RPF, the desired performance criteria can be achieved.
	
	\item The proposed PAP control scheme departs from the structure of typical PPC controllers, which are generally constructed based on the barrier condition theorem \cite{ames2019control}. Instead, it employs the concept of Nagumo's forward invariance, extending the concept of "control to satisfy safety" to "control to satisfy performance requirements". The introduction of CBF naturally renders the proposed method singularity-free.
	
	\item Building on the concept of control barrier functions, we establish sufficient conditions for each state trajectory to converge and remain confined within its corresponding performance-satisfied tube region. Since these conditions are affine constraints, this motivates us to introduce Sontag's universal formula for stabilization into this problem. Such an approach not only provides a simple controller structure that can be regarded as backstepping control with a time-varying gain, thereby reducing the computation burden. Concurrently, it ensures the globally rapid attraction of each state trajectory to its RPF curve within an explicitly expressed time duration.
\end{enumerate}

\textit{Notations: } 
This paper introduces the following notations for analysis. The norm of any vector or matrix is denoted by $\|\cdot\|$. For an arbitrary vector $\boldsymbol{a}\in\mathbb{R}^{3}$, $\boldsymbol{a}^{\times}$ represents the skew-symmetric matrix of the cross product. The notation $\text{diag}(a_{i})$ signifies the diagonal matrix whose elements are spanned by $a_{i}$. In this context, $\mathfrak{R}_{b}$ refers to the spacecraft body-fixed frame, $\mathfrak{R}_{i}$ refers to the Earth-central inertia frame, and $\mathfrak{R}_{d}$ refers to the target body-fixed frame. Additionally, $\boldsymbol{I}_{N}$ represents the identity matrix in $\mathbb{R}^{N\times N}$; $\boldsymbol{1}_{N}$ represents the $\mathbb{R}^{N}$ column vector, where its elements are all $1$; $\boldsymbol{0}$ represents the zero vector with appropriate dimension.

\section{Preliminaries}

\begin{lemma} \label{lemmatanh}
	Given a sufficiently large constant $h > 0$, $|\tanh(hx)|\ge |x|$ holds for $x\in\left[-h_{0},h_{0}\right]$, where $h_{0} \in\left(0,1\right)$ is a positive constant \cite{li2016adaptive}.
\end{lemma}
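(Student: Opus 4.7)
The plan is to exploit the oddness of $\tanh$ and its concavity on $[0,\infty)$ to reduce the claim to a single endpoint inequality. Since both $|\tanh(h\,\cdot)|$ and $|\cdot|$ are even functions, it suffices to establish $\tanh(hx)\ge x$ for $x\in[0,h_0]$.

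First, I would observe that $x\mapsto \tanh(hx)$ is concave on $[0,\infty)$ for every $h>0$ and vanishes at the origin. Writing an arbitrary $x\in[0,h_0]$ as the convex combination $x = (x/h_0)\cdot h_0 + (1-x/h_0)\cdot 0$ and applying the definition of concavity gives
\begin{equation}
\tanh(hx) \;\ge\; \frac{x}{h_0}\,\tanh(hh_0).
\end{equation}
Consequently, the whole pointwise inequality on the interval collapses to the single scalar condition $\tanh(hh_0)\ge h_0$.

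Second, I would verify this scalar condition for sufficiently large $h$. The hypothesis $h_0\in(0,1)$ is decisive here: since $h_0$ lies strictly inside the range $(-1,1)$ of $\tanh$, the threshold $h^{\star}:=\operatorname{arctanh}(h_0)/h_0$ is a finite positive number, and every $h\ge h^{\star}$ delivers $\tanh(hh_0)\ge h_0$. Substituting into the concavity bound yields $\tanh(hx)\ge x$ on $[0,h_0]$, and the symmetric argument on $[-h_0,0]$ completes the proof.

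The main pitfall to avoid is the temptation to attack the inequality directly through the sign of $f(x):=\tanh(hx)-x$ via its derivative $f'(x)=h\,\mathrm{sech}^{2}(hx)-1$, which drags one into locating interior critical points and comparing boundary values. The concavity argument above is cleaner: it reduces everything to a one-point check, and it also clarifies why the constraint $h_0<1$ is essential — if $h_0\ge 1$, the right-hand side of the endpoint condition would lie outside the range of $\tanh$, and no finite $h$ would suffice.
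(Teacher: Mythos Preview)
Your argument is correct. Reducing to $x\ge 0$ by oddness, invoking concavity of $\tanh(h\,\cdot)$ on $[0,\infty)$ to collapse the pointwise inequality to the endpoint condition $\tanh(hh_0)\ge h_0$, and then reading off the sharp threshold $h^\star=\operatorname{arctanh}(h_0)/h_0$ is clean and complete.

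The paper takes a different route: it bounds $\tanh(hx)$ from below by the elementary function $1-2e^{-2hx}$ on $[0,\infty)$, and then argues that the nonzero crossing of this lower envelope with the diagonal $y=x$ occurs no later than the nonzero crossing of $\tanh(hx)$ with $y=x$. The required $h$ is then extracted from the explicit equation $1-2e^{-2hh_0}=h_0$, i.e.\ $h\ge \ln\!\bigl(2/(1-h_0)\bigr)/(2h_0)$. Your concavity argument is more direct and yields the \emph{sharp} threshold (for $h_0=0.95$ you get $h^\star\approx 1.928$ versus the paper's $\approx 1.9415$), and it makes the role of the hypothesis $h_0<1$ transparent. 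The paper's approach, on the other hand, produces a threshold expressed purely with exponentials, which is convenient if one wants to avoid $\operatorname{arctanh}$ in downstream estimates; its crossing-point comparison, however, is stated rather informally and would benefit from the kind of monotonicity bookkeeping you explicitly sidestep.
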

\begin{proof}
Without loss of generality, we consider \(x \geq 0\). According to the definition of the hyperbolic tangent function, we have: \(\tanh(hx) = 1 - 2e^{-2hx} + \mathcal{O}(e^{-2hx})\), where \(\mathcal{O}(e^{-2hx})\) stands for the higher-order infinitesimal term.
Thus, \(1 - 2e^{-2hx} \leq \tanh(hx)\) holds for \(x \in [0,+\infty)\), implying that the non-zero root \(x_{c}\) of the equation \(1 - 2e^{-2hx_{c}} = x_{c}\) will be smaller than that of the equation \(\tanh(hx_{p}) = x_{p}\), i.e., \(x_{c} < x_{p}\). Therefore, given arbitrary \(x_{0} \in (0,1)\), one can solve for \(h_{0}\), where \(|\tanh(hx)| > x\) holds for \(x \in [-x_{0},x_{0}]\) if $h \ge h_{0}$.

Here we provide an example to facilitate the explanation. Let \(x_{0} = 0.95\), then we have \(h_{0} = \frac{10\ln(40)}{19} \approx 1.9415\). This implies that we should ensure \(h > 1.9415\), and then \(|\tanh(hq_{evi}(t))| > |q_{evi}(t)|\) will be satisfied for \(q_{evi}(t) \in [-0.95,0.95]\).

\end{proof}

\section{Problem Formulation}
\subsection{System Modeling}\label{errsys}
Considering the general unit quaternion-based attitude error model of rigid-body spacecrafts, let $\boldsymbol{q}_{e} = \left[\boldsymbol{q}^{\text{T}}_{ev},q_{e0}\right]^{\text{T}}\in\mathbb{R}^{4}$ be the attitude quaternion error, where $\boldsymbol{q}_{ev} = \left[q_{ev1},q_{ev2},q_{ev3}\right]^{\text{T}}\in\mathbb{R}^{3}$ and $q_{e0}\in\mathbb{R}$ represents the vector and the scalar part of $\boldsymbol{q}_{e}$, respectively.
 Let $\boldsymbol{\omega}_{e}\in\mathbb{R}^{3}$ be the attitude angular velocity error, resolved in the body-fixed frame $\mathfrak{R}_{b}$. Applying these notations, the attitude error model can be expressed as follows \cite{wertz_2012_spacecraft}:
\begin{equation}\label{errsystem}
	\begin{aligned}
			\dot{\boldsymbol{q}}_{ev} &= \boldsymbol{F}_{e}\boldsymbol{\omega}_{e},\quad \dot{q}_{e0}  = -\frac{1}{2}\boldsymbol{q}^{\text{T}}_{ev}\boldsymbol{\omega}_{e}\\
				\boldsymbol{J}\dot{\boldsymbol{\omega}}_{e} &= \boldsymbol{\Omega}_{e} + \boldsymbol{u} + \boldsymbol{d}
	\end{aligned}
\end{equation}
In the above expression, $\boldsymbol{\omega}_{e}$ can be decomposed as $\boldsymbol{\omega}_{e} = \boldsymbol{\omega}_{s} - \boldsymbol{C}_{e}\boldsymbol{\omega}_{d}$, where $\boldsymbol{\omega}_{s}\in\mathbb{R}^{3}$ denotes the angular velocity of the body-fixed frame $\mathfrak{R}_{b}$ with respect to the inertia frame $\mathfrak{R}_{i}$ and resolved in $\mathfrak{R}_{b}$; $\boldsymbol{C}_{e}\in\mathbb{R}^{3\times 3}$ represents the transformation matrix from target frame $\mathfrak{R}_{d}$ to the body-fixed frame $\mathfrak{R}_{b}$; $\boldsymbol{\omega}_{d}\in\mathbb{R}^{3}$ denotes the desired angular velocity, resolved in frame $\mathfrak{R}_{d}$.
 $\boldsymbol{J}\in\mathbb{R}^{3\times 3}$ denotes the inertia matrix with respect to frame $\mathfrak{R}_{b}$; $\boldsymbol{u}\in\mathbb{R}^{3}$ stands for the control input; $\boldsymbol{d}\in\mathbb{R}^{3}$ represents the external disturbances; $\boldsymbol{\Omega}_{e}\in\mathbb{R}^{3}$ is an inertia matrix coupling-term, defined as $\boldsymbol{\Omega}_{e} \triangleq \boldsymbol{J}\boldsymbol{\omega}^{\times}_e\boldsymbol{C}_e\boldsymbol{\omega}_d - \boldsymbol{J}\boldsymbol{C}_e\dot{\boldsymbol{\omega}}_d -\boldsymbol{\omega}_s^{\times}\boldsymbol{J}\boldsymbol{\omega}_s$;
$\boldsymbol{F}_{e} \in \mathbb{R}^{3\times 3}$ represents the Jacobian matrix of the kinematics equation, expressed as $\boldsymbol{F}_{e} \triangleq \frac{1}{2}\left[q_{e0}\cdot\boldsymbol{I}_{3} + \boldsymbol{q}^{\times}_{ev}\right]$.

To facilitate the following analysis, we consider the following assumptions:
\begin{assumption}\label{AssJ}
The inertia matrix $\boldsymbol{J}$ is a known positive-definite symmetric matrix.
\end{assumption}
Since parameter uncertainties are not the main focus of this paper, we temporarily ignore this issue. For those circumstances with acceptable parameter uncertainties, they can be treated as a lumped disturbance and then compensated using the observer technique, as stated by existing literature \cite{wang2022finite}.

\begin{assumption}\label{Assd}
The norm of the external disturbance is bounded, and the norm of its differentiation signal, i.e., $\|\dot{\boldsymbol{d}}\|$, is bounded by a known constant $\xi_{d}$ such that $\|\dot{\boldsymbol{d}}\| \le \xi_{d}$ holds for $t\in\left[0,+\infty\right)$.
\end{assumption}
Such an assumption is reasonable since most actual external disturbances in attitude control scenarios are varying with acceptable rates.

\subsection{Precisely-Assigned Performance Constraints}\label{PAPC}

This subsection introduces the concept of Precisely-Assigned Performance constraints (PAP constraints), involving requirements on precise settling time, precise steady-state error, and maximum overshoot elimination. Such performance constraints are mainly constructed based on typical engineering indices used to evaluate the system's time response.

\subsubsection{\textbf{Precise Settling Time}}
Settling time stands for the duration for an error state to transition from its initial status to a designated neighborhood near its equilibrium point. For the $i$-th component of $\boldsymbol{q}_{ev}$, given a constant error bound $\Delta_{e}>0$, the practical settling time of the system, denoted as $T_{s}$, can be considered as a minimized time constant that satisfies the condition: $T_{s}:\left\{|q_{evi}(t)|<\Delta_{e},\, t\geq T_{s}\right\}$.

Following this concept, the context of precise settling time requires the error state to converge and remain within the region: $|q_{evi}(t)|<\Delta_{e}$ just near an assigned time instant $T_{sd}$, which can be specified as:
\begin{equation}\label{PSTcons}
	|T_{s}-T_{sd}| \leq \Delta_{T}
\end{equation}
where $\Delta_{T}>0$ stands for an allowed tiny error margin, given for practical consideration.

\subsubsection{\textbf{Precise Steady-State Error}}
Steady-state error evaluates the control accuracy of the closed-loop system. The context of precise steady-state error requires each error state to be restricted beneath a given upper bound $\Delta_{e}$ after the desired settling time, such that the following relationship holds:
\begin{equation}
	|q_{evi}(t)| < \Delta_{e},\quad\text{for}\quad t\geq T_{sd}
\end{equation}

\subsubsection{\textbf{Maximum Overshoot Elimination}}
Maximum overshoot characterizes the maximum deviation when the error state crosses from the initial side of its equilibrium point to the other. The context of maximum overshoot elimination requires the actual maximum overshoot to be restricted under a given threshold $O_{m}\geq 0$, hence we have:
\begin{equation}
	|\min\left(\text{sgn}(q_{evi}(0))q_{evi}(t)\right)| < O_{m}
\end{equation}

Consequently, it can be observed that PAP constraints characterize a complete ideal transient response, where the system is expected to converge at $t=T_{sd}$, remain satisfying $|q_{evi}(t)|<\Delta_{e}$ for $t\in\left[T_{sd},+\infty\right)$, while appearing with limited overshoot smaller than $O_{m}$.

\subsection{Control Objective}
Based on the concept of the PAP constraint, the main control objective of this paper is to design a control law $\boldsymbol{u}\in\mathbb{R}^{3}$ to guide the system $\boldsymbol{q}_{ev}$, $\boldsymbol{\omega}_{e}$ to a small region near its equilibrium point $\boldsymbol{q}_{ev}=\boldsymbol{0}$, $\boldsymbol{\omega}_{e} = \boldsymbol{0}$ even in the presence of external disturbances $\boldsymbol{d}$. The control law should ensure the satisfaction of PAP constraint for the system's output state, i.e., $\boldsymbol{q}_{ev}$, which involve requirements on precise settling time, precise steady-state error, and maximum overshoot elimination.

\section{Precisely-Assigned Performance Control Scheme}

\subsection{Main Structure and Logic}
In order to address all control objectives involved in the PAP constraints, this section proposes the PAP control scheme. In Subsection \ref{TPC}, we first present the design of the Reference Performance Function (RPF), which characterizes an ideal convergence behavior for each state trajectory. Additionally, we establish a tracking performance constraint, quantifying tube-like state regions that meet the performance requirements. Subsequently, in Subsection \ref{PSC}, by introducing the concept of Control Barrier Function (CBF) \cite{ames2016control}, we derive a sufficient condition for maintaining or achieving the tracking performance constraint, termed the PAP condition. Motivated by the backstepping control method and introducing key concepts from Sontag’s Universal Formula for Stabilization \cite{sontag1989universal}, the PAP controller is designed to guide the system to satisfy the affine constraint derived from the PAP condition. This ensures that each state trajectory will converge and be confined within its corresponding tube-like performance-satisfied region within a finite time duration.
 The main logic of the PAP control scheme is presented in Figure \ref{PAPlogic} for additional explanation.

	\begin{figure}[hbt!]
	\centering 
	\includegraphics[width = 0.52\textwidth]{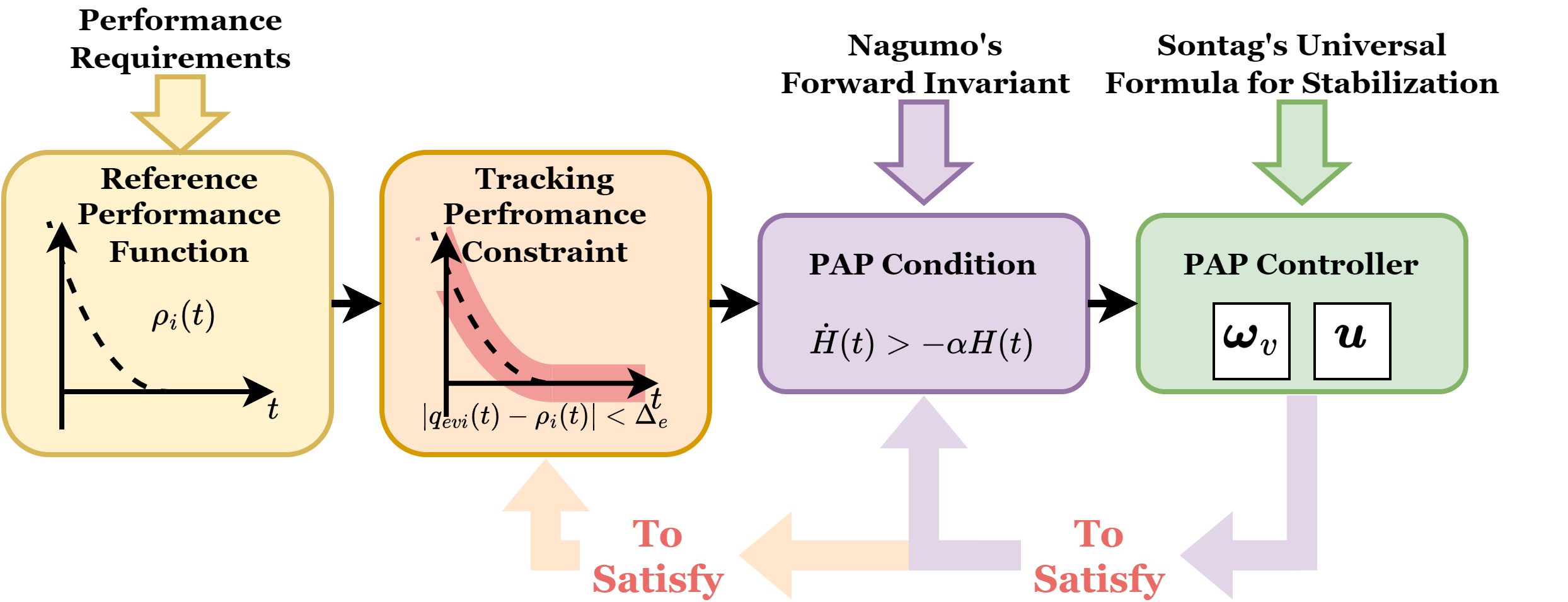}
	\caption{Main Logic of the PAP Control Scheme
	}    
	\label{PAPlogic}  
\end{figure}

\subsection{Tracking Performance Constraint}\label{TPC}

In order to achieve all performance criteria, we depart from the typical funnel-like performance envelopes utilized in Prescribed Performance Control (PPC) schemes. Instead, we guide each error state towards a smooth $\mathcal{C}^{2}$ reference trajectory, specifically designed regarding the given PAP constraints, termed the Reference Performance Function (RPF).

Considering each $i$-th error state of $\boldsymbol{q}_{ev}$ denoted as $q_{evi}(t)(i=1,2,3)$, we denote its corresponding RPF as $\rho_{i}(t)$, constructed based on the following principles:
\begin{equation}
	\begin{aligned}
		\rho_{i}(t) &\ge 0,\ \dot{\rho}_{i}(t) \le 0,\ \rho_{i}(T_{sd}) = 0,\quad\text{if}\quad q_{evi}(0) \ge 0\\
		\rho_{i}(t) &\le 0,\ \dot{\rho}_{i}(t) \ge 0,\ \rho_{i}(T_{sd}) = 0,\quad\text{if}\quad q_{evi}(0) < 0\\
	\end{aligned}
\end{equation}
where $T_{sd} > 0$ stands for the desired settling time instant.
As an example, this paper utilizes the 4th-order polynomial function, as referenced in \cite{hu2017adaptive}, to construct the RPF. The expression of $\rho_{i}(t)$ is given as follows:
\begin{equation}\label{RPFEXPRESSION}
	\begin{aligned}
		\rho_{i}(t) = 
		\begin{cases}
			a_{0i} + \sum_{j=1}^{4}a_{ji}t^{j}, & \quad t \in [0,T_{sd}]\\
			0, & \quad t \in (T_{sd},+\infty)
		\end{cases}
	\end{aligned}
\end{equation}
where $a_{ji}$ for $j=0,1,2,3,4$ are determined by the following smoothness conditions: $\rho_{i}(0) = a_{0i}$, $\dot{\rho}_{i}(0) = 0$, $\rho_{i}(T_{sd}) = 0$, and $\dot{\rho}_{i}(T_{sd}) = 0$.

Based on the defined RPF $\rho_{i}(t)$, we define a tracking performance constraint for each component, specified as follows:
\begin{equation}\label{trackingpercons}
	|q_{evi}(t) - \rho_{i}(t)| < \Delta_{e}
\end{equation}
where $\Delta_{e} > 0$ is a designated constant error bound. It can be observed that the constraint specified by equation  (\ref{trackingpercons}) establishes a tube-like region.
As an example, we use the 4th-order polynomial function to construct the RPF, where parameters are given as $\Delta_{e} = 0.02$, $\rho_{i}(0) = 0.5965$, and $T_{sd} = 80$. As shown in Figure \ref{TrackingRegion}, the red tube-like region represents the state region that satisfies the tracking performance constraint.

\begin{figure}[hbt!]
	\centering 
	\includegraphics[width = 0.52\textwidth]{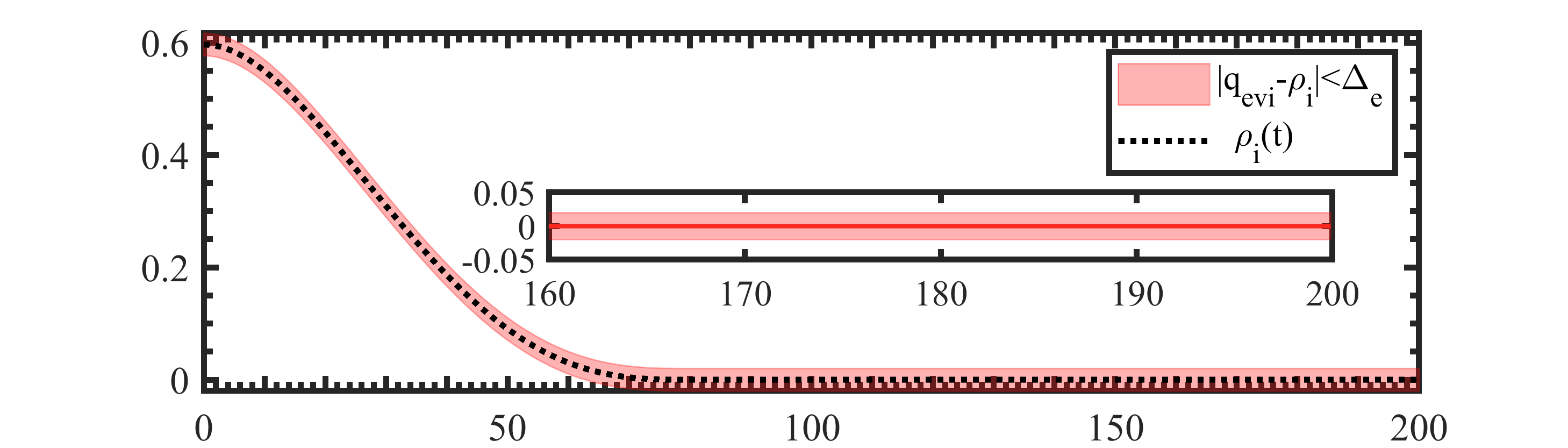}
	\caption{Illustration of the RPF and the state region satisfies the TPC}    
	\label{TrackingRegion}  
\end{figure}

In order to facilitate the following analysis, we then define a tracking error variable as:
\begin{equation}
	\boldsymbol{s} \triangleq \boldsymbol{q}_{ev} - \boldsymbol{\rho}\in\mathbb{R}^{3}
\end{equation}
Therefore, the constraint given by equation (\ref{trackingpercons}) can be reformulated as $|s_{i}(t)| < \Delta_{e}$, where $s_{i}(t)$ represents the $i$-th component of $\boldsymbol{s}$. 

Consequently, with $\rho_{i}(t)\to 0$ and the satisfaction of $|s_{i}(t)|<\Delta_{e}$, each $q_{evi}(t)$ will be restricted in a tube region centered around $\rho_{i}(t)$ (the red region as shown in Figure \ref{TrackingRegion}), and $|q_{evi}(t)|<\Delta_{e}$ will be achieved just at $t=T_{sd}$, remaining for $t>T_{sd}$. By designing a sufficiently small error bound $\Delta_{e}>0$, the PAP constraints can be satisfied.

\subsection{PAP Condition}\label{PSC}
From the end of Subsection \ref{TPC}, our main purpose is to ensure the satisfaction of the tracking performance constraint.
To fulfill such purpose, we migrate the concept of "control to ensure safety" into the realm of "control to satisfy performance requirements", introducing the concept of CBF to derive a sufficient condition for maintaining or achieving the tracking performance constraint as provided by equation (\ref{trackingpercons}), termed the Precisely-Assigned Performance condition (PAP condition).

Before going further, we first introduce the following theorem, as referenced \cite{ames2019control}, known as the Nagumo's forward invariant theorem \cite{nagumo1942lage}:
\begin{lemma}
	\textit{Nagumo's Theorem:}  Given a dynamical system $\dot{\boldsymbol{x}} = \boldsymbol{f}(\boldsymbol{x})$ with $\boldsymbol{x}\in\mathbb{R}^{n}$ the system states, assuming that the safe set $\mathcal{S}$ is defined on the 0-superlevel set of a smooth function (continuously differentiable) $D(\boldsymbol{x})$, i.e, $D(\boldsymbol{x}) = 0$ for $\boldsymbol{x}\in\partial\mathcal{S}$ and $D(\boldsymbol{x})>0$ for $\text{Int}(\mathcal{S})$. Then, the necessary and sufficient condition for set $\mathcal{S}$ to remain invariant is \cite{nagumo1942lage,blanchini1999set}:
	\begin{equation}
		\dot{D}(\boldsymbol{x})\ge0,\quad\text{for}\quad\forall\boldsymbol{x}\in\partial\mathcal{S}
	\end{equation}
\end{lemma}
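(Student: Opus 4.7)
The plan is to prove the equivalence by treating necessity and sufficiency separately, both hinging on analysing $D$ along system trajectories. Throughout, $\dot{D}(\boldsymbol{x})$ denotes the Lie derivative $\nabla D(\boldsymbol{x})\cdot\boldsymbol{f}(\boldsymbol{x})$, so that $\dot{D}$ coincides with $(d/dt)D(\boldsymbol{x}(t))$ along any solution. I will assume enough regularity on $\boldsymbol{f}$ (Lipschitz continuity) and on $D$ ($\mathcal{C}^{1}$, as assumed in the statement) to guarantee existence, uniqueness and continuous dependence of trajectories.

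For \emph{necessity}, I would pick an arbitrary boundary point $\boldsymbol{x}_{0}\in\partial\mathcal{S}$ and consider the unique trajectory $\boldsymbol{x}(t)$ with $\boldsymbol{x}(0)=\boldsymbol{x}_{0}$. Forward invariance of $\mathcal{S}$ forces $\boldsymbol{x}(t)\in\mathcal{S}$ for all $t\ge 0$, so the 0-superlevel set characterisation gives $D(\boldsymbol{x}(t))\ge 0$ while $D(\boldsymbol{x}_{0})=0$. Forming the right-sided difference quotient,
\begin{equation*}
\dot{D}(\boldsymbol{x}_{0})=\lim_{t\to 0^{+}}\frac{D(\boldsymbol{x}(t))-D(\boldsymbol{x}_{0})}{t}=\lim_{t\to 0^{+}}\frac{D(\boldsymbol{x}(t))}{t}\ge 0,
\end{equation*}
which is the desired boundary inequality. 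For \emph{sufficiency}, the natural strategy is a first-exit-time contradiction. Take $\boldsymbol{x}_{0}\in\text{Int}(\mathcal{S})$ and suppose the trajectory eventually leaves $\mathcal{S}$; define $t^{\ast}=\inf\{t>0:\boldsymbol{x}(t)\notin\mathcal{S}\}$. Continuity of the flow and of $D$ forces $D(\boldsymbol{x}(t^{\ast}))=0$, hence $\boldsymbol{x}(t^{\ast})\in\partial\mathcal{S}$, while $D(\boldsymbol{x}(t))<0$ for some sequence $t\downarrow t^{\ast}$. A first-order Taylor expansion of $D\circ\boldsymbol{x}$ at $t^{\ast}$ gives $D(\boldsymbol{x}(t))=\dot{D}(\boldsymbol{x}(t^{\ast}))(t-t^{\ast})+o(t-t^{\ast})$; dividing by $t-t^{\ast}>0$ and letting $t\downarrow t^{\ast}$ returns a non-positive limit, which together with the hypothesis $\dot{D}\ge 0$ on $\partial\mathcal{S}$ forces $\dot{D}(\boldsymbol{x}(t^{\ast}))=0$.

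The main obstacle lies precisely in this borderline case $\dot{D}(\boldsymbol{x}(t^{\ast}))=0$: the leading-order Taylor term vanishes and the trajectory could in principle graze the boundary and leave tangentially, so the first-order argument alone is inconclusive. The standard remedy is a perturbation argument. Introduce $D_{\epsilon}(\boldsymbol{x},t):=D(\boldsymbol{x})+\epsilon t$, whose derivative along any trajectory satisfies $\dot{D}_{\epsilon}\ge\epsilon>0$ wherever $\dot{D}\ge 0$; run the strict version of the exit-time argument on the time-parametrised set $\{\boldsymbol{x}:D(\boldsymbol{x})\ge-\epsilon t\}$ to establish strict invariance, then pass to the limit $\epsilon\to 0^{+}$ using continuous dependence of solutions on initial data. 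For a fully rigorous treatment I would alternatively invoke the classical equivalence, detailed in \cite{blanchini1999set}, between the scalar condition $\dot{D}\ge 0$ on $\partial\mathcal{S}$ and the Bouligand tangent-cone condition $\boldsymbol{f}(\boldsymbol{x})\in T_{\mathcal{S}}(\boldsymbol{x})$, from which Nagumo's original viability result applies directly. Either route yields forward invariance, completing the proof.
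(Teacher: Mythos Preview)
The paper does not supply its own proof of this lemma; it merely states the result and cites Nagumo and Blanchini. So there is no ``paper's approach'' to compare against beyond the implicit appeal to those references, which is precisely your second alternative (the tangent-cone characterisation from \cite{blanchini1999set}).

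Your necessity argument is clean and correct. For sufficiency, the first-exit-time set-up and your identification of the degenerate case $\dot{D}(\boldsymbol{x}(t^{\ast}))=0$ are both right. However, the perturbation you propose has a gap: you define $D_{\epsilon}(\boldsymbol{x},t)=D(\boldsymbol{x})+\epsilon t$ and want to argue invariance of the time-varying set $\{D(\boldsymbol{x})\ge-\epsilon t\}$, but the hypothesis $\dot{D}\ge 0$ is only assumed on $\partial\mathcal{S}=\{D=0\}$, not on the shifted level sets $\{D=-\epsilon t\}$ for $t>0$. So you cannot conclude $\dot{D}_{\epsilon}\ge\epsilon$ on the boundary of the perturbed set, and the strict exit-time argument does not go through as written. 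A workable variant is to perturb the vector field instead, replacing $\boldsymbol{f}$ by $\boldsymbol{f}+\epsilon\nabla D$, which makes the boundary condition strict \emph{on $\partial\mathcal{S}$ itself}; but this needs the transversality assumption $\nabla D\neq 0$ on $\partial\mathcal{S}$, which the lemma as stated does not include (and which is in fact needed for the scalar formulation to be equivalent to Nagumo's tangent-cone condition).

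Your fallback---reducing to the Bouligand tangent-cone condition and invoking Nagumo's viability theorem via \cite{blanchini1999set}---is the honest route and is exactly what the paper's citation is doing. If you want a self-contained argument, add the regularity hypothesis $\nabla D(\boldsymbol{x})\neq 0$ for $\boldsymbol{x}\in\partial\mathcal{S}$ and use the vector-field perturbation above.
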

As introduced in \cite{ames2019control}, such a theorem was used to develop the so-called CBF method, where the authors provide the following minimally restrictive sufficient and necessary conclusion:
\begin{lemma}\label{T1}
	Given a smooth $D(t)$, if the following condition holds:
	\begin{equation}
		\begin{aligned}
			\dot{D}(t) \ge -\alpha_{0}\left(D(t)\right), \quad D(0) > 0
		\end{aligned}
	\end{equation}
	then $D(t) > 0$ holds for $t\in\left[0,+\infty\right)$, and the set defined by $\mathcal{D}: \left\{D(t)>0\right\}$ is forward-invariant.
	In the above expression, $\alpha_{0}(\cdot): \mathbb{R} \to \mathbb{R}$ stands for an extended $\mathcal{K}_{\infty}$ function that is continuous and strictly monotonically increasing, satisfying $\alpha_{0}(0) = 0$, $\lim_{r \to +\infty} \alpha_{0}(r) = +\infty$, and $\lim_{r \to -\infty} \alpha_{0}(r) = -\infty$.
\end{lemma}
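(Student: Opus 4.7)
The plan is to reduce the claim to a scalar comparison argument. I would compare the solution $D(t)$ of the differential inequality $\dot D(t) \ge -\alpha_0(D(t))$ against the trajectory of the scalar autonomous ODE
\begin{equation*}
\dot y(t) = -\alpha_0(y(t)), \qquad y(0) = D(0) > 0,
\end{equation*}
and argue that $y(t)$ never leaves the open positive half-line, whence $D(t) \ge y(t) > 0$ for all $t \ge 0$. Forward invariance of $\mathcal D = \{D(t) > 0\}$ then follows by rerunning the argument from any initial time $t_0$ with $D(t_0) > 0$.

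The sequence of steps I would carry out is as follows. First, set up the scalar comparison lemma for differential inequalities (the standard version, e.g., perturbing $\alpha_0$ by a small positive constant, invoking strict comparison, and passing to the limit): this yields $D(t) \ge y(t)$ on the common interval of existence. Second, analyse the reference ODE. Since $\alpha_0(0) = 0$, the constant function $y \equiv 0$ is a solution; by the uniqueness of the Cauchy problem at any positive initial condition (which the strict monotonicity of $\alpha_0$ provides in the usual CBF setting where $\alpha_0$ is locally Lipschitz on $\mathbb{R} \setminus \{0\}$), any nontrivial trajectory starting from $y(0) > 0$ cannot coalesce with the zero solution in finite time, so $y(t) > 0$ for all $t \ge 0$. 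Third, combine these two facts to obtain $D(t) \ge y(t) > 0$, concluding the pointwise positivity statement. Fourth, for forward invariance, invoke the contrapositive version: if there existed a minimal $T^* > 0$ with $D(T^*) = 0$, restricting the inequality and comparison argument to $[0, T^*]$ would produce $D(T^*) \ge y(T^*) > 0$, a contradiction.

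The main obstacle I anticipate is the delicate handling of the scalar ODE at the origin. Mere continuity of $\alpha_0$ does not guarantee uniqueness, and one can exhibit continuous strictly increasing $\alpha_0$ (e.g., cube-root-type) for which the reference trajectory $y(t)$ actually reaches zero in finite time; in such a case the comparison bound alone is insufficient to keep $D$ strictly positive. I would navigate this by appealing to the standard CBF regularity convention, namely that $\alpha_0$ is locally Lipschitz away from the origin (which is how this lemma is invoked in \cite{ames2019control}), or alternatively by working with the maximal solution and noting that $D$ itself is smooth by hypothesis so that at any putative first-touch time $T^*$ the inequality $\dot D(T^*) \ge -\alpha_0(0) = 0$ combined with the strict descent required to reach zero from a positive neighbourhood produces the contradiction directly. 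Either route closes the argument without requiring further machinery.
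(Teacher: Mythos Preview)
The paper does not actually prove this lemma: it is stated as a known result imported from the control-barrier-function literature (specifically \cite{ames2019control}), with only a brief remark afterwards noting the linear special case $\alpha_0(r)=\alpha r$. There is therefore no in-paper proof to compare against; your proposal supplies an argument where the paper simply cites one.

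That said, your comparison-principle outline is exactly the standard route used in the CBF literature to justify such statements, so in spirit you are aligned with what the paper is invoking. Your identification of the regularity issue at the origin is astute and, in fact, sharper than the paper's own statement: with $\alpha_0$ merely continuous and strictly increasing (e.g.\ $\alpha_0(r)=r^{1/3}$), the reference trajectory $y(t)$ can reach zero in finite time, and the comparison bound $D(t)\ge y(t)$ then yields only $D(t)\ge 0$, not strict positivity. The CBF sources the paper cites typically assume local Lipschitz continuity of $\alpha_0$ precisely to rule this out, which is the first of your two proposed fixes. Your second fix (the first-touch argument using $\dot D(T^*)\ge -\alpha_0(0)=0$) is not quite sufficient on its own, since $\dot D(T^*)=0$ is compatible with $D$ touching zero from above; closing that route would require a higher-order or Nagumo-type argument. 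I would recommend simply adopting the Lipschitz hypothesis, which is what the paper's application (the linear choice $\alpha_0(r)=\alpha r$ in the subsequent remark) actually uses.
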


\begin{remark}
The above theorem is presented considering a general extended $\mathcal{K}_{\infty}$ function $\alpha_{0}(\cdot)$. However, one can conveniently choose a positive constant $\alpha > 0$ as a special case. Thus, this simplifies the condition given in Theorem \ref{T1} to: $\dot{D}(t) > -\alpha D(t)$, such a special case can be also seen from \cite{aubin2011viability}.
\end{remark}

We define a scalar function $H(t)$, given as follows:
\begin{equation}\label{Hfunction}
H(t) \triangleq K_{H}\left(\Delta^{2}_{e} - \sum_{i=1}^{3}s^{2}_{i}(t)\right)=K_{H}\left(\Delta^{2}_{e}-\boldsymbol{s}^{\text{T}}\boldsymbol{s}\right)
\end{equation}
where $K_{H}>0$ is a positive constant gain parameter; $\Delta_{e}>0$ is a positive threshold parameter.

From the definition of $H(t)$, it is not hard to obtain that $H(t)>0$ is a strengthened sufficient condition for the satisfaction of $|s_{i}(t)|<\Delta_{e},\forall i=1,2,3$. We define a set $\mathcal{H}_{s}:\left\{H(t)>0\right\}$ and a mutually exclusive set $\mathcal{H}_{u}:\left\{H(t)\le0\right\}$, called the performance-satisfied set and the performance-unsatisfied set, respectively. Since the $0$-level set of $H(t)$ separates the state regions $\mathcal{H}_{s}$ and $\mathcal{H}_{u}$, despite some conservativeness, $H(t)$ can be used to indicate the satisfaction of the tracking performance constraint.

We then discuss under which condition, the system is able to maintain, or to recover to satisfy $H(t)>0$. Applying on the conclusion of Theorem \ref{T1} and the property of exponential functions, we have the following three derivations:
\begin{itemize}
	\item[1)] 
	For $\dot{H}(t) + \alpha H(t) \ge 0$ and $H(0) > 0$, $H(t) > 0$ always holds, the system is performance-satisfied, and the set $\mathcal{H}_{s}$ is forward-invariant.
	
		\item[2)] 	For $\dot{H}(t) + \alpha H(t) < 0$ and $H(0) > 0$, although the system is currently performance-satisfied, since $H(t)$ has the possibility of being $H(t) \le 0$, the system suffers from the risk of being performance-unsatisfied.
		
			\item[3)] 	For $\dot{H}(t) + \alpha H(t) > 0$ but with $H(t) \le 0$, this shows that the system is currently performance-unsatisfied. However, it has the trend to recover to be performance-satisfied again.
\end{itemize}
\begin{figure}[hbt!]
	\centering 
	\includegraphics[width = 0.4\textwidth]{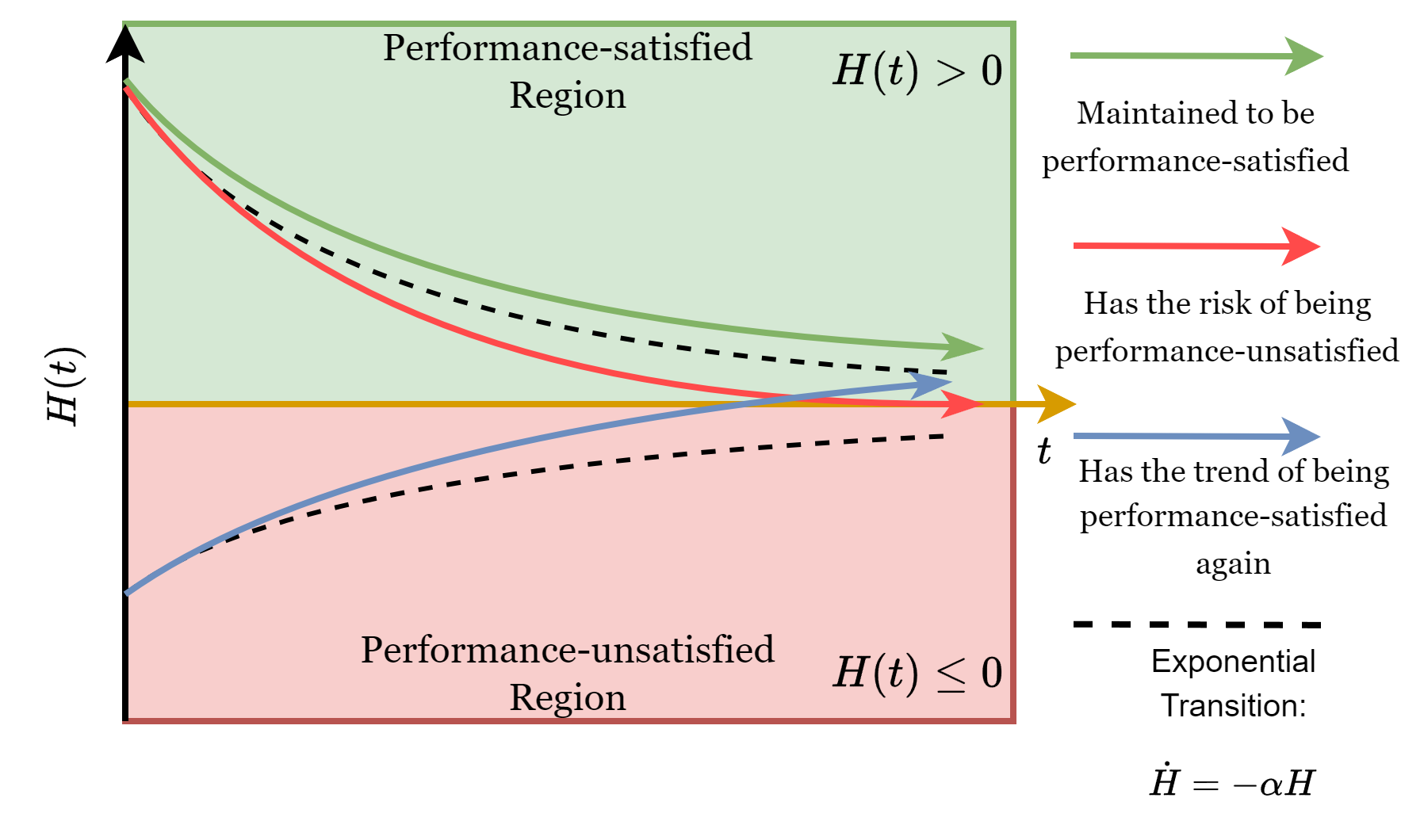}
	\caption{Illustration Explanation of the above Conclusions}    
	\label{PAPcondition}  
\end{figure}

A simple and intuitive explanation of these conclusions is illustrated in Figure \ref{PAPcondition}: if $\dot{H}(t) > -\alpha H(t)$ holds, then according to the Comparison lemma \cite{khalil2002control}, the actual time-response of $H(t)$ will be above the one given by the ODE: $\dot{y}=-\alpha y, y(0)=H(0)$. Thus, since $\lim_{t\to+\infty}y(t)\to0$ holds for both $H(0)\ge 0$ and $H(0)<0$, this leads to the three conclusions mentioned earlier. In Figure \ref{PAPcondition}, the black-dotted lines represent exponential-transition responses for different initial conditions $H(0)\ge 0$ and $H(0) < 0$.

From the above discussion, note that if the system satisfies $\dot{H}(t) + \alpha H(t) > 0$ for $H(0) \leq 0$, or satisfies $\dot{H}(t) + \alpha H(t) \geq 0$ for $H(0) > 0$, the system is able to \textbf{maintain} (condition (1)) or \textbf{recover} (condition (3)) to the performance-satisfied region $\mathcal{H}_{s}$. Therefore, we can consider a strengthened condition for all possible situations by removing the equality condition, which gives rise to the so-called \textit{PAP Condition}:

	\begin{equation}\label{PAPConditionexpression}
		\dot{H}(t)>-\alpha H(t)
	\end{equation}
	While such a condition ensures that $H(t)>0$ will be ultimately achieved under any circumstances, it can be equivalently reformulated by considering the time-derivative of $\boldsymbol{s}$, expressed as:
	\begin{equation}\label{affineconstraint}
		-2K_{H}\boldsymbol{s}^{\text{T}}\left(\boldsymbol{F}_{e}\boldsymbol{\omega}_{e}-\dot{\boldsymbol{\rho}}\right)+\alpha H(t)>0
	\end{equation}
	
	 Consequently, our next key step is to guide the system to satisfy the affine constraint shown in equation (\ref{PAPConditionexpression}), or equivalently, the constraint in equation (\ref{affineconstraint}).

\subsection{PAP Controller}\label{VTC}
Motivated by the Backstepping control method, this subsection proposes a Precisely-Assigned Performance (PAP) controller to guide the system in satisfying the affine constraint as shown in equation (\ref{affineconstraint}). The PAP controller consists of a virtual control law $\boldsymbol{\omega}_{v}\in\mathbb{R}^{3}$ and an actual tracking control law $\boldsymbol{u}\in\mathbb{R}^{3}$, developed by incorporating the main ideas from Sontag's Universal Formula for Stabilization, as originally proposed in \cite{sontag1989universal}. Additionally, to address the perturbation issue, we simultaneously employ the extended state observer technique to construct a disturbance observer.

\subsubsection{The Design of the Virtual Control Law} 

Firstly, we present the design of the smooth virtual control law $\boldsymbol{\omega}_{v}\in\mathbb{R}^{3}$, expressed as follows:
\begin{equation}\label{omegav}
	\boldsymbol{\omega}_{v} = \boldsymbol{F}_{e}^{-1}\left[\left(2\lambda_{v}(A_{1},B_{1})K_{H}-K_{s}\right)\boldsymbol{s}+\dot{\boldsymbol{\rho}}\right]
\end{equation}
where $K_{s}>0$ is a positive constant gain parameter; $\lambda_{v}\left(A_{1},B_{1}\right):\mathbb{R}\times\mathbb{R}\to\mathbb{R}$ is a scalar function, given as:
\begin{equation}\label{lambdav}
	\lambda_{v}\left(A_{1},B_{1}\right) \triangleq 
	\begin{cases}
		0,\quad &B_{1}=0\\
		\frac{-A_{1}-\sqrt{A_{1}^{2}+q_{1}(B_{1})B_{1}}}{B_{1}},\quad&B_{1}\neq0
	\end{cases} 
\end{equation}
where $q_{1}:\mathbb{R}\to\mathbb{R}$ is a smooth function that satisfies $q_{1}(0)=0$ and $q_{1}(B_{1})>0$ for $\forall B_{1}\neq 0$, chosen as $q_{1}(B_{1}) = \sigma_{1} B_{1}$ with $\sigma_{1}>0
$ a positive constant; $A_{1}$, $B_{1}$ are two scalar variables given as follows:
\begin{equation}\label{A1B1}
	\begin{aligned}
		A_{1} &=  -\alpha H(t) + \delta_{H}\|\tanh(C_{s}\boldsymbol{s})\|\\
		B_{1} &= \|2K_{H}\boldsymbol{s}\|^{2} = 4K^{2}_{H}\boldsymbol{s}^{\text{T}}\boldsymbol{s}
	\end{aligned}
\end{equation}
where $\delta_{H}, C_{s}>0$ are positive constant parameters. For writing convenience, $\lambda_{v}(A_{1},B_{1})$ is simplified as $\lambda_{v}$ for the following analysis.

\subsubsection{The Design of the Actual Control Law} 
Subsequently, we present the design of the actual control law $\boldsymbol{u}\in\mathbb{R}^{3}$. Before process further, we define a tracking error variable $\boldsymbol{z}_{2}\in\mathbb{R}^{3}$ as: $\boldsymbol{z}_{2} \triangleq \boldsymbol{\omega}_{e} - \boldsymbol{\omega}_{v}$.
We then define a function $h(t)$ as follows:
\begin{equation}
	h(t) \triangleq K_{h}\left(\Delta^{2}_{h}-\boldsymbol{z}^{\text{T}}_{2}\boldsymbol{z}_{2}\right)
\end{equation}
where $K_{h}>0$ is a positive constant gain parameter, $\Delta_{h}>0$ is a positive threshold parameter.
Accordingly, $\boldsymbol{u}$ is designed as follows:
\begin{equation}\label{u}
	\begin{aligned}
		\boldsymbol{u} &= -\boldsymbol{\Omega}_{e} - \hat{\boldsymbol{d}} + \boldsymbol{J}\dot{\boldsymbol{\omega}}_{v}
		 + \left[2\lambda_{u}(A_{2},B_{2})K_{h}-K_{2}\right]\boldsymbol{J}\boldsymbol{z}_{2}
	\end{aligned}
\end{equation}
where $K_{2}>0$ is a positive constant gain parameter; $\boldsymbol{\Omega}_{e}\in\mathbb{R}^{3}$ represents the dynamical-coupling term; $\hat{\boldsymbol{d}}\in\mathbb{R}^{3}$ stands for the estimation of the observer, later given in Subsection \ref{observer}; $\dot{\boldsymbol{\omega}}_{v}\in\mathbb{R}^{3}$ stands for the time-derivative of $\boldsymbol{\omega}_{v}$; $\lambda_{u}(A_{2},B_{2}):\mathbb{R}\times\mathbb{R}\to\mathbb{R}$ is another scalar function, expressed as:
\begin{equation}
	\lambda_{u}\left(A_{2},B_{2}\right) \triangleq 
	\begin{cases}
		0,\quad &B_{2}=0\\
		\frac{-A_{2}-\sqrt{A_{2}^{2}+q_{2}(B_{2})B_{2}}}{B_{2}},\quad&B_{2}\neq0
	\end{cases} 
\end{equation}
Similarly, $q_{2}(B_{2})$ is selected as $q_{2}(B_{2}) = \sigma_{2}B_{2}$, where $\sigma_{2}>0$ is a positive constant; $A_{2}\in\mathbb{R}$ and $B_{2}\in\mathbb{R}$ are given as:
\begin{equation}\label{A2B2}
	\begin{aligned}
		A_{2} &=  -\lambda^{2}_{J\min}\left(\gamma h(t) + \delta_{h}\|\boldsymbol{z}_{2}\|\right)\\
		B_{2} &= \|2K_{h}\boldsymbol{J}\boldsymbol{z}_{2}\|^{2} = 4K^{2}_{h}\boldsymbol{z}_{2}^{\text{T}}\boldsymbol{J}^{\textbf{T}}\boldsymbol{J}\boldsymbol{z}_{2}
	\end{aligned}
\end{equation}
where both $\delta_{h}>0$ and $\gamma>0$ are positive constants; $\lambda_{J\min}$ stands for the minimum eigenvalue of $\boldsymbol{J}$. Similarly, $\lambda_{u}(A_{2},B_{2})$ is written as $\lambda_{u}$ for brevity. Notably, since $\boldsymbol{J} = \boldsymbol{J}^{\text{T}}$ holds, then we have $B_{2} = 4K^{2}_{h}\boldsymbol{z}^{\text{T}}_{2}\boldsymbol{J}^{2}\boldsymbol{z}_{2}$.

To facilitate the following analysis, we discuss about the expression of $\dot{\boldsymbol{\omega}}_{v}$, which can be analytically given as:
\begin{equation}\label{domegav}
	\begin{aligned}
		\dot{\boldsymbol{\omega}}_{v} &= \dot{\boldsymbol{F}_{e}^{-1}}\left[\left(2\lambda_{v}K_{H}-K_{s}\right)\boldsymbol{s}+\dot{\boldsymbol{\rho}}\right]\\
		&\quad + \boldsymbol{F}^{-1}_{e}\left[2K_{H}\left(\dot{\lambda_{v}}\boldsymbol{s}+\lambda_{v}\dot{\boldsymbol{s}}\right)-K_{s}\dot{\boldsymbol{s}}+\ddot{\boldsymbol{\rho}}\right]
	\end{aligned}
\end{equation}
where $\dot{\lambda}_{v}$ stands for the time-derivative of $\lambda_{v}$. According to the expression of $\lambda_{v}$ as shown in equation (\ref{lambdav}), since both $-\alpha H(t)$ and $\delta_{H}\|\tanh(C_{s}\boldsymbol{s})\|$ are smooth functions due to the smoothness of $\boldsymbol{s}$, by applying the theorem of implicit function, the time-derivative of $\lambda_{v}$ exists and can be then given as:
\begin{equation}
	\dot{\lambda}_{v} = \frac{\partial\lambda_{v}}{\partial A_{1}}\dot{A}_{1}(t) + \frac{\partial\lambda_{v}}{\partial B_{1}}\dot{B}_{1}(t)
\end{equation}
where $\dot{A}_{1}(t) = \left(-\alpha\dot{H}(t)
+
\frac{\delta_{H}}{\|\tanh(C_{s}\boldsymbol{s})\|}(\tanh(C_{s}\boldsymbol{s}))^{\text{T}}\boldsymbol{M}_{s}\dot{\boldsymbol{s}}\right)$ holds with $\boldsymbol{M}_{s} = \text{diag}\left(C_{s}\left(1-\tanh^{2}(C_{s}s_{i}(t))\right)\right)\in\mathbb{R}^{3\times 3}$ a diagonal matrix, and $\dot{B}_{1}(t)$ is given as $\dot{B}_{1}(t) = 8K^{2}_{H}\boldsymbol{s}^{\text{T}}\dot{\boldsymbol{s}}$;  
$\dot{\boldsymbol{s}}$ is previously given as $\dot{\boldsymbol{s}} = \boldsymbol{F}_{e}\boldsymbol{\omega}_{e} - \dot{\boldsymbol{\rho}}$, and $\dot{\boldsymbol{F}_{e}^{-1}} = -\boldsymbol{F}^{-1}_{e}\dot{\boldsymbol{F}}_{e}\boldsymbol{F}^{-1}_{e}$ holds with $\dot{\boldsymbol{F}}_{e} = \frac{1}{2}\left(-\frac{1}{2}\boldsymbol{q}^{\text{T}}_{ev}\boldsymbol{\omega}_{e}\boldsymbol{I}_{3}+\left(\boldsymbol{F}_{e}\boldsymbol{\omega}_{e}\right)^{\times}\right)$. All aforementioned parts are calculable, and according to the theorem \cite{sontag1989universal}, both $\lambda_{u}$ and $\lambda_{v}$ are smooth functions outside zeroing point $B_{1}=0$ and $B_{2}=0$, also mentioned in \cite{cohen2023characterizing}.
\begin{remark}
Note that when $B_{1}$ and $B_{2}$ approaches $B_{1}=0$, $B_{2} = 0$, $\lambda_{u}$ and $\lambda_{v}$ has the possibility of increasing intensively to $+\infty$. To practically avoid such a drawback, one can add a small term $\epsilon>0$ on the denominator as: $B_{1}+\epsilon$ and $B_{2}+\epsilon$. Such a treatment is used in our simulation section, where we set $\epsilon = 1\times 10^{-7}$.
\end{remark}

\subsubsection{Disturbance Observer Design}\label{observer}
In this subsection, we employ the extended-state observer technique, originally proposed in \cite{li2016extended}, to derive a disturbance observer.

Let $\boldsymbol{F}_{1}\triangleq \boldsymbol{\omega}_{e}\in\mathbb{R}^{3}$, and we regard $\boldsymbol{J}^{-1}\boldsymbol{d}$ as another extended state of the observer, denoted as $\boldsymbol{F}_{2} \triangleq \boldsymbol{J}^{-1}\boldsymbol{d}$. Define $\boldsymbol{\xi} \triangleq \boldsymbol{J}^{-1}\dot{\boldsymbol{d}}$. Considering the dynamics equation of $\boldsymbol{\omega}_{e}$ as shown in equation (\ref{errsystem}), this generates the following system:
\begin{equation}
	\begin{aligned}
		\dot{\boldsymbol{F}}_{1} &= \boldsymbol{F}_{d}+\boldsymbol{J}^{-1}\boldsymbol{u}+\boldsymbol{F}_{2}, \quad\dot{\boldsymbol{F}}_{2} = \boldsymbol{\xi}
	\end{aligned}
\end{equation}
where $\boldsymbol{F}_{d}\in\mathbb{R}^{3}$ can be expressed as $\boldsymbol{F}_{d} = \boldsymbol{J}^{-1}\boldsymbol{\Omega}_{e}$. Defining the estimation of $\boldsymbol{F}_{1}$ and $\boldsymbol{F}_{2}$ as $\hat{\boldsymbol{F}}_{1}$ and $\hat{\boldsymbol{F}}_{2}$ respectively, the estimation error can be defined as $\boldsymbol{e}_{1} \triangleq \hat{\boldsymbol{F}}_{1} - \boldsymbol{F}_{1}$ and $\boldsymbol{e}_{2}\triangleq \hat{\boldsymbol{F}}_{2} - \boldsymbol{F}_{2}$. The observer design can then be specified as:
\begin{equation}\label{observerdesign}
	\begin{aligned}
		\dot{\hat{\boldsymbol{F}}}_{1} &= \boldsymbol{F}_{d} + \boldsymbol{J}^{-1}\boldsymbol{u} + \hat{\boldsymbol{F}}_{2} - C_{1}\beta\boldsymbol{e}_{1}, \\
		\dot{\hat{\boldsymbol{F}}}_{2} &= -C_{2}\beta^{2}\boldsymbol{e}_{1}
	\end{aligned}
\end{equation}
where $C_{1}>0$ and $C_{2}>0$ are two observer gain constants, and $\beta>0$ is a design parameter. Based on the estimate of $\boldsymbol{F}_{2}$, the estimate of $\boldsymbol{d}$ can be constructed as $\hat{\boldsymbol{d}} = \boldsymbol{J}\hat{\boldsymbol{F}}_{2}$.

The convergence analysis of the observer design will be given later in Appendix \ref{append}.

\begin{remark}

It should be noted that in \cite{verginis2022funnel}, the control law is derived directly using the CBF-QP technique, which necessitates solving an optimization problem for controller implementation. Our control scheme, inspired primarily by the backstepping method, features a hierarchical structure. Here, we design a virtual control signal (velocity) and then track it to satisfy the required constraints. However, using the CBF-QP directly does not ensure a smooth virtual control signal since it only guarantees Lipschitz continuity. This limitation motivates the introduction of Sontag's universal formula for stabilization, a key element in our design that enables the construction of a smooth virtual control signal. Consequently, this allows for the direct use of its time derivative, i.e., $\dot{\boldsymbol{\omega}}_{v}$, to tightly track the virtual control signal as required by the backstepping control method. Furthermore, our proposed PAP control scheme avoids the need to solve an optimization problem, resulting in a control structure similar to that of backstepping control with time-varying gains determined by $\lambda_{v}(A_{1}, B_{1})$ and $\lambda_{u}(A_{2}, B_{2})$.


\end{remark}

\section{Stability Analysis}
The stability analysis of the proposed controller mainly involves two perspectives. In Subsection \ref{basicstable}, we demonstrate that the proposed controller ensures the Uniformly Ultimately Boundedness (UUB) of the overall closed-loop system. Subsequently, in Subsection \ref{Attract}, we focus on proving that the proposed PAP controller effectively guides all state trajectories into the performance-satisfying tube regions, ensuring that $H(t) > 0$ is achieved within a finite time duration.

\subsection{Boundedness of the Closed-Loop System}\label{basicstable}
In this subsection, we first demonstrate that the closed-loop system is UUB with the proposed PAP controller under any circumstance. The main theorem is presented as follows:

\begin{theorem}\label{Tmain1}
For the attitude error system given by equation (\ref{errsystem}), under Assumptions \ref{AssJ}, \ref{Assd}, with the PAP controller consisting of the virtual control law $\boldsymbol{\omega}_{v}\in\mathbb{R}^{3}$, the actual control law $\boldsymbol{u}\in\mathbb{R}^{3}$, and the disturbance observer design shown in equation (\ref{observerdesign}), the closed-loop system is UUB.
\end{theorem}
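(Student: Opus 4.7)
The plan is to build a composite Lyapunov function tracking the three error channels of the closed loop and to show $\dot{V} \le -cV+C$, which is the standard route to UUB. I would take $V = V_s + V_z + V_o$, where $V_s = \tfrac{1}{2}\boldsymbol{s}^{\text{T}}\boldsymbol{s}$ measures the kinematic tracking error, $V_z = \tfrac{1}{2}\boldsymbol{z}_2^{\text{T}}\boldsymbol{J}\boldsymbol{z}_2$ measures the velocity tracking error, and $V_o$ is a quadratic form in the observer estimation errors $(\boldsymbol{e}_1,\boldsymbol{e}_2)$ whose decay is established in the appendix. Under Assumption~\ref{Assd}, the observer cascade yields that $\|\tilde{\boldsymbol{d}}\|=\|\boldsymbol{d}-\hat{\boldsymbol{d}}\|=\|\boldsymbol{J}\boldsymbol{e}_2\|$ acts as a bounded perturbation on the $\boldsymbol{z}_2$-channel.

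The core structural observation I would exploit is that Sontag's construction forces $\lambda_v,\lambda_u\le 0$ everywhere, since $\sqrt{A^2+q(B)B}\ge |A|$ whenever $q(B)B\ge 0$; hence both Sontag terms in the closed-loop error dynamics are non-positive and only help the Lyapunov decrement. Substituting $\boldsymbol{\omega}_e=\boldsymbol{z}_2+\boldsymbol{\omega}_v$ together with the explicit $\boldsymbol{\omega}_v$ into $\dot{\boldsymbol{s}}=\boldsymbol{F}_e\boldsymbol{\omega}_e-\dot{\boldsymbol{\rho}}$ gives $\dot{\boldsymbol{s}} = \boldsymbol{F}_e\boldsymbol{z}_2 + (2\lambda_v K_H - K_s)\boldsymbol{s}$, so $\dot{V}_s \le -K_s\|\boldsymbol{s}\|^2 + \boldsymbol{s}^{\text{T}}\boldsymbol{F}_e\boldsymbol{z}_2$. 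Plugging the actual control law into $\boldsymbol{J}\dot{\boldsymbol{z}}_2 = \boldsymbol{\Omega}_e+\boldsymbol{u}+\boldsymbol{d}-\boldsymbol{J}\dot{\boldsymbol{\omega}}_v$ triggers the cancellations of $\boldsymbol{\Omega}_e$ and $\boldsymbol{J}\dot{\boldsymbol{\omega}}_v$, leaving $\boldsymbol{J}\dot{\boldsymbol{z}}_2 = (2\lambda_u K_h - K_2)\boldsymbol{J}\boldsymbol{z}_2 + \tilde{\boldsymbol{d}}$, hence $\dot{V}_z \le -K_2\lambda_{J\min}\|\boldsymbol{z}_2\|^2 + \boldsymbol{z}_2^{\text{T}}\tilde{\boldsymbol{d}}$.

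Two applications of Young's inequality --- one for $\boldsymbol{s}^{\text{T}}\boldsymbol{F}_e\boldsymbol{z}_2$ (with $\|\boldsymbol{F}_e\|$ automatically bounded because $\boldsymbol{q}_e$ is a unit quaternion) and one for $\boldsymbol{z}_2^{\text{T}}\tilde{\boldsymbol{d}}$ --- together with $K_s$ and $K_2$ chosen sufficiently large, reduce the sum $\dot{V}_s+\dot{V}_z$ to the form $-c_1\|\boldsymbol{s}\|^2 - c_2\|\boldsymbol{z}_2\|^2 + c_3\|\tilde{\boldsymbol{d}}\|^2$. Combining this with the observer bound from the appendix yields $\dot{V}\le -cV + C$, so $(\boldsymbol{s},\boldsymbol{z}_2,\boldsymbol{e}_1,\boldsymbol{e}_2)$ is UUB; since $\boldsymbol{\rho}$ and $\boldsymbol{\omega}_v$ are bounded by design, UUB of the whole closed loop follows.

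The hard part I expect is preventing the plant and observer arguments from becoming circular: $\boldsymbol{u}$ depends on $\hat{\boldsymbol{d}}$ while the observer is driven by $\boldsymbol{u}$ and by $\dot{\boldsymbol{d}}$, so one must either close the loop with a genuinely joint Lyapunov function or justify a cascade/small-gain decomposition that accounts explicitly for the coupling gains introduced by $\boldsymbol{J}\hat{\boldsymbol{F}}_2$. A secondary subtlety is that the bound on $\dot{V}_z$ is only formal unless $\dot{\boldsymbol{\omega}}_v$ is well defined, which requires $B_1$ to stay away from zero; the $\epsilon$-regularization mentioned in the remark following the definition of $A_2,B_2$ handles this for the analytical argument as well.
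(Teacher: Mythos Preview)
Your proposal matches the paper's proof almost exactly: the same composite Lyapunov function $V_s+V_z+V_\varepsilon$, the same observation that $\lambda_v,\lambda_u\le 0$ lets one discard the Sontag terms (the paper actually prints ``$\lambda_v>0$'', a typo---your sign is the correct one), and the same Young-inequality bookkeeping to arrive at $\dot V\le -K_VV+C$. Your circularity worry dissolves once you write out the observer error dynamics $\dot{\boldsymbol e}_1=-C_1\beta\boldsymbol e_1+\boldsymbol e_2$, $\dot{\boldsymbol e}_2=-C_2\beta^2\boldsymbol e_1-\boldsymbol\xi$: the control $\boldsymbol u$ cancels completely, so the observer block is autonomous (driven only by the bounded $\boldsymbol\xi=\boldsymbol J^{-1}\dot{\boldsymbol d}$) and the joint Lyapunov sum closes without any small-gain argument.
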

\begin{proof}
	Choosing a candidate Lyapunov function as $V_{s} = \frac{1}{2}K_{H}\boldsymbol{s}^{\text{T}}\boldsymbol{s}$, taking its time-derivative and combines with the design of $\boldsymbol{\omega}_{v}$ from equation (\ref{omegav}), it can be obtained that:
	\begin{equation}\label{dotVs1}
		\begin{aligned}
			\dot{V}_{s}  &= K_{H}\boldsymbol{s}^{\text{T}}\left(\boldsymbol{F}_{e}\boldsymbol{\omega}_{v}-\dot{\boldsymbol{\rho}}\right) + K_{H}\boldsymbol{s}^{\text{T}}\boldsymbol{F}_{e}\boldsymbol{z}_{2} \\
			&= 2K^{2}_{H}\lambda_{v}\|\boldsymbol{s}\|^{2} - K_{H}K_{s}\|\boldsymbol{s}\|^{2} + K_{H}\boldsymbol{s}^{\text{T}}\boldsymbol{F}_{e}\boldsymbol{z}_{2} \\
		\end{aligned}
	\end{equation}
	Given that $\lambda_{v} > 0$ when $B_{1} \neq 0$ holds, owing to the fact that $\|\boldsymbol{F}_{e}\|=\frac{1}{2}$, we derive:
	\begin{equation}\label{dotVs}
		\begin{aligned}
			\dot{V}_{s} &\le -K_{H}\left(K_{s} - \frac{a_{1}}{4}\right)\|\boldsymbol{s}\|^{2} + \frac{K_{H}}{4a_{1}}\|\boldsymbol{z}_{2}\|^{2}
		\end{aligned}
	\end{equation}
	
	For the tracking error system defined by $\boldsymbol{z}_{2}$, we select a candidate Lyapunov function $V_{z} = \frac{1}{2}\boldsymbol{z}_{2}^{\text{T}}\boldsymbol{J}\boldsymbol{z}_{2}$. Taking the time derivative of $V_{z}$ and combines it with the control law $\boldsymbol{u} \in \mathbb{R}^{3}$ from equation (\ref{u}), we yield:
	\begin{equation}
		\begin{aligned}
			\dot{V}_{z} &= \left(2\lambda_{u}K_{h} - K_{2}\right)\boldsymbol{z}_{2}^{\text{T}}\boldsymbol{J}\boldsymbol{z}_{2} + \boldsymbol{z}_{2}^{\text{T}}\tilde{\boldsymbol{d}} \\
			&\le -K_{2}\boldsymbol{z}_{2}^{\text{T}}\boldsymbol{J}\boldsymbol{z}_{2} + \frac{b_{1}}{2}\|\boldsymbol{z}_{2}\|^{2} + \frac{1}{2b_{1}}\|\tilde{\boldsymbol{d}}\|^{2}
		\end{aligned}
	\end{equation}
	where $b_{1} > 0$ is a positive constant; $\tilde{\boldsymbol{d}} = \boldsymbol{d} - \hat{\boldsymbol{d}} \in \mathbb{R}^{3}$ represents the estimation error of lumped disturbances. As shown in Appendix \ref{append}, there exists a sufficiently large positive constant $\xi_{m}$ such that $\|\tilde{\boldsymbol{d}}\| \le \xi_{m}$ holds. Meanwhile, note that $\|\boldsymbol{z}_{2}\|^{2} \le \frac{2V_{z}}{\lambda_{J\min}}$ holds, this gives a rise to:
	\begin{equation}\label{dotVz}
		\dot{V}_{z} \le -\left(2K_{2} - \frac{b_{1}}{\lambda_{J\min}}\right)V_{z} + \frac{\xi^{2}_{m}}{2b_{1}}
	\end{equation}
Combining the results in equations \eqref{dotVs}, \eqref{dotVz}, and Appendix~\ref{append}, we can deduce that:
\begin{equation}\label{dotVall1}
	\begin{aligned}
		&\dot{V}_{s}+\dot{V}_{z}+\dot{V}_{\varepsilon} \\
		\le& -2\left(K_{s}-\frac{a_{1}}{4}\right)V_{s}-\left(2K_{2}-\frac{2a_{1}b_{1}+K_{H}}{2a_{1}\lambda_{J\min}}\right)V_{z}\\
		&\quad -C_{d}V_{\varepsilon}+\frac{\xi^{2}_{m}}{2b_{1}}+h_{m}
	\end{aligned}
\end{equation}
Further considering a lumped Lyapunov certificate $V$ as $V \triangleq V_{s}+V_{z}+V_{\varepsilon}$, from equation \eqref{dotVall1}, it can be derived that:
\begin{equation}\label{dotVall}
	\dot{V}\le -K_{V}V+h_{m}+\frac{\xi^{2}_{m}}{2b_{1}}
\end{equation}
where $K_{V}$ is a positive constant expressed as $K_{V}\triangleq \min\left[2\left(K_{s}-\frac{a_{1}}{4}\right),\left(2K_{2}-\frac{2a_{1}b_{1}+K_{H}}{2a_{1}\lambda_{J\min}}\right),C_{d}\right]$. Notably, we should ensure that $K_{s}-\frac{a_{1}}{4}>0$, $2K_{2}-\frac{2a_{1}b_{1}+K_{H}}{2a_{1}\lambda_{J\min}}>0$, and $C_{d}>0$ hold, meaning that $K_{s}$ and $K_{2}$ should be chosen to be big enough.

Since equation~\eqref{dotVall} can be rewritten as: $\dot{V}(t) \le -(1-\mu)K_{V}V(t)-\mu K_{V}V(t) + h_{m}+\xi^{2}_{m}/2b_{1}$, where $\mu\in\left(0,1\right)$ is a constant. Thus, if $\mu K_{V} V(t)>h_{m}+\frac{\xi^{2}_{m}}{2b_{1}}$ holds, then we have $\dot{V}(t)\le -(1-\mu)K_{V}V(t)$.
This suggests that the closed-loop system will exponentially converge to a compact residual set, thereby proving that the closed-loop system is UUB. This completes the proof of Theorem \ref{Tmain1}.

\end{proof}

\subsection{Attractivity Analysis of the Tube Region}\label{Attract}

In this subsection, we demonstrate that the proposed PAP controller effectively guides the system to satisfy $H(t)>0$ within a limited time duration under all circumstances. This indicates that each state trajectory will finally converge and be confined within its corresponding "tube" region, thereby satisfying every performance criteria involved in the context of the PAP constraints.
 
 Main theorems are stated as follows:
\begin{theorem}\label{T2}
	
\begin{enumerate}
	\item For $h(0) > 0$ and $H(0) > 0$, the system satisfies $H(t) > 0$ for $t \in [0,+\infty)$. Therefore, this shows that the tracking performance constraint will be always adhered.
	
	\item For $h(0) > 0$ but with $H(0) \leq 0$, the system is initially performance-unsatisfied. However, it will take at most $T_{H1}$ time to lead the system back to the status $H(t) > 0$, and hence the system will be performance-satisfied again for $t \in [T_{H1},+\infty)$, where $T_{H1}$ is given as:
	\begin{equation}\label{TH1}
		T_{H1} = \frac{1}{\alpha}\ln\left(\frac{\alpha|H(0)|}{\delta_{S}\Delta_{e}} + 1\right)
	\end{equation}
	where $\alpha$, $\delta_{S}$, and $\Delta_{e}$ are design parameters.
	
	\item For $h(t) \leq 0$, under the worst-case scenario, the system will be performance-satisfied for $t \in [T_{h}+T_{H2},+\infty)$, where $T_{h}$ and $T_{H2}$ are given as:
	\begin{equation}\label{Th}
		\begin{aligned}
			T_{h} &= \frac{1}{\gamma}\ln\left(\frac{\gamma|h(0)|}{\delta_{z}\Delta_{h}} + 1\right)\\
			T_{H2} &= \frac{1}{\alpha}\ln\left(\frac{\alpha|\mathcal{H}_{B}|}{\delta_{S}\Delta_{e}} + 1\right)
		\end{aligned}
	\end{equation}
	where $\mathcal{H}_{B}$ is defined as $\mathcal{H}_{B} \triangleq H(0)e^{-\alpha T_{h}} + \frac{\delta_{S}\Delta_{e}}{\alpha}(1-e^{-\alpha T_{h}}) - \mathcal{G}_{B}$, and $\mathcal{G}_{B}$ is given in equation (\ref{GB}).
\end{enumerate}

\end{theorem}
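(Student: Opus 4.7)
The plan is to establish two coupled differential inequalities, one for $H(t)$ and one for $h(t)$, by substituting the closed-loop dynamics together with the Sontag-formula identity, and then to apply Lemma \ref{T1} and the Comparison Lemma in each of the three cases.

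First I would compute $\dot{H}(t)=-2K_{H}\boldsymbol{s}^{\text{T}}(\boldsymbol{F}_{e}\boldsymbol{\omega}_{e}-\dot{\boldsymbol{\rho}})$, substitute $\boldsymbol{\omega}_{e}=\boldsymbol{\omega}_{v}+\boldsymbol{z}_{2}$ with the virtual control (\ref{omegav}), and use the defining Sontag identity $A_{1}+\lambda_{v}B_{1}=-\sqrt{A_{1}^{2}+q_{1}(B_{1})B_{1}}$. This turns the expression into
\begin{equation*}
\dot{H}(t)+\alpha H(t)\geq \delta_{H}\|\tanh(C_{s}\boldsymbol{s})\|+\sqrt{q_{1}(B_{1})B_{1}}+2K_{H}K_{s}\|\boldsymbol{s}\|^{2}-2K_{H}\boldsymbol{s}^{\text{T}}\boldsymbol{F}_{e}\boldsymbol{z}_{2},
\end{equation*}
after dropping the $A_{1}^{2}$ term under the square root. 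Using Lemma \ref{lemmatanh} to lower-bound $\|\tanh(C_{s}\boldsymbol{s})\|$ by a multiple of $\|\boldsymbol{s}\|$ on the relevant range, together with $\|\boldsymbol{F}_{e}\|\leq 1/2$ and the constraint $\|\boldsymbol{z}_{2}\|<\Delta_{h}$ that follows from $h(t)>0$, I would collect coefficients and deduce that there exists a constant $\delta_{S}>0$ with $\dot{H}(t)+\alpha H(t)\geq \delta_{S}\|\boldsymbol{s}\|$ whenever $h(t)>0$, provided the design gains are chosen appropriately. An analogous computation with $\lambda_{u}$ and the actual control (\ref{u}), combined with the observer-error bound from Appendix \ref{append}, should yield $\dot{h}(t)+\gamma h(t)\geq \delta_{z}\|\boldsymbol{z}_{2}\|$ for some $\delta_{z}>0$.

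Case (1) then follows immediately: both inequalities imply the PAP conditions $\dot{h}\geq-\gamma h$ and $\dot{H}\geq-\alpha H$, so by Lemma \ref{T1} the sets $\{h>0\}$ and $\{H>0\}$ are forward invariant, hence $H(t)>0$ for all $t\geq 0$. For Case (2), the invariance of $\{h>0\}$ still gives $\|\boldsymbol{z}_{2}\|<\Delta_{h}$ throughout, so the bound $\dot{H}+\alpha H\geq \delta_{S}\|\boldsymbol{s}\|$ applies; as long as $H(t)\leq 0$ one has $\|\boldsymbol{s}\|\geq\Delta_{e}$, so $\dot{H}+\alpha H\geq\delta_{S}\Delta_{e}$, and the Comparison Lemma yields $H(t)\geq H(0)e^{-\alpha t}+\tfrac{\delta_{S}\Delta_{e}}{\alpha}(1-e^{-\alpha t})$. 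Setting the right-hand side to zero and solving for $t$ produces the expression (\ref{TH1}) for $T_{H1}$.

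For Case (3) the two stages are treated sequentially. While $h(t)\leq 0$ one has $\|\boldsymbol{z}_{2}\|\geq\Delta_{h}$, hence $\dot{h}+\gamma h\geq\delta_{z}\Delta_{h}$, and the Comparison Lemma gives recovery of $h>0$ within $T_{h}$ as in (\ref{Th}). During this same interval $H(t)$ may deteriorate because $\boldsymbol{z}_{2}$ can be large; I would derive a worst-case lower bound of the form $\dot{H}+\alpha H\geq -\mathcal{G}(t)$, where $\mathcal{G}(t)$ captures the extreme value of the cross term $-2K_{H}\boldsymbol{s}^{\text{T}}\boldsymbol{F}_{e}\boldsymbol{z}_{2}$ evaluated along the envelope of $\|\boldsymbol{z}_{2}\|$ inferred from the UUB bounds of Theorem \ref{Tmain1} and the observer decay in Appendix \ref{append}. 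The accumulated contribution of $\mathcal{G}$ over $[0,T_{h}]$ is precisely the quantity $\mathcal{G}_{B}$ in (\ref{GB}), and propagating it through the Comparison Lemma yields the worst-case value $\mathcal{H}_{B}$ of $H(T_{h})$. Once $h>0$ is restored, the situation reduces exactly to Case (2) with initial value $\mathcal{H}_{B}$, producing the additional duration $T_{H2}$ and the stated total bound $T_{h}+T_{H2}$. The hard part will be the bookkeeping in this last case: rigorously quantifying $\mathcal{H}_{B}$ requires carefully tracking the coupled evolution of $H$ and $\|\boldsymbol{z}_{2}\|$ on $[0,T_{h}]$, since the two dynamics are genuinely entangled while $h\leq 0$ and the naive UUB bound alone is not tight enough to recover the explicit time estimate.
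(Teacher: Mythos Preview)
Your treatment of Cases (1) and (2) is essentially the paper's argument: derive $\dot h+\gamma h\geq\delta_z\|\boldsymbol z_2\|$ and $\dot H+\alpha H\geq(\delta_H-K_H\|\boldsymbol z_2\|)\|\boldsymbol s\|$, then use $h>0\Rightarrow\|\boldsymbol z_2\|<\Delta_h$ together with the choice $\delta_H=\delta_S+K_H\Delta_h$ to obtain $\dot H+\alpha H\geq\delta_S\|\boldsymbol s\|$, and finish with Lemma~\ref{T1} or the Comparison Lemma.

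The gap is in Case (3), and it is exactly the one you flag yourself. You propose to bound $\|\boldsymbol z_2\|$ on $[0,T_h]$ via the UUB estimate of Theorem~\ref{Tmain1} and the observer decay. That is the wrong source: the UUB rate $K_V$ is unrelated to $\gamma$, so you will never recover the explicit $\mathcal G_B$ of equation~(\ref{GB}), which depends only on $\alpha$, $\gamma$, and $Z_1=\sqrt{\|\boldsymbol z_2(0)\|^2-\Delta_h^2}$. The paper instead extracts the $\|\boldsymbol z_2\|$ envelope from the very inequality you already have for $h$. From $\dot h+\gamma h>\delta_z\Delta_h$ on $\{h\le0\}$ one gets a lower bound on $h(t)$; inverting the definition $h=K_h(\Delta_h^2-\|\boldsymbol z_2\|^2)$ turns this into $\|\boldsymbol z_2(t)\|^2<(\|\boldsymbol z_2(0)\|^2-\Delta_h^2)e^{-\gamma t}+\Delta_h^2$, hence $\|\boldsymbol z_2(t)\|<Z_1e^{-\gamma t/2}+\Delta_h$. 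Substituting this back into the $H$-inequality (and using $\|\boldsymbol s\|\ge\Delta_e$ under the worst-case assumption $H\le0$) gives a linear scalar ODE with explicit forcing,
\[
\dot H+\alpha H>\delta_S\Delta_e-K_HZ_1\Delta_e\,e^{-\gamma t/2}=:m-ne^{-\gamma t/2},
\]
which integrates in closed form (with a case split $\gamma\neq2\alpha$ versus $\gamma=2\alpha$). Maximizing the tail term $\tfrac{2n}{2\alpha-\gamma}(e^{-\gamma t/2}-e^{-\alpha t})$ (respectively $nte^{-\alpha t}$) over $t$ is what produces $\mathcal G_B$, and evaluating the resulting lower bound at $t=T_h$ yields $\mathcal H_B$. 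So the ``genuine entanglement'' you worry about is resolved not by external bounds but by feeding the $h$-analysis back into the $H$-analysis.
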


\textit{Proof:} \subsubsection{Attractivity Anslysis of the Tube Region $h(t)>0$}\label{attracth}

We start by considering the time response of $h(t)$, showing that the system will be guided to satisfy $h(t) > 0$ within a limited time duration. The function $h(t)$ has been previously defined as $h(t) = K_{h}\left(\Delta^{2}_{h} - \boldsymbol{z}^{\text{T}}_{2}\boldsymbol{z}_{2}\right)$.

Taking the time-derivative of $h(t)$ and combining it with the dynamics of $\dot{\boldsymbol{\omega}}_{e}$ as shown in equation (\ref{errsystem}), we obtain:
\begin{equation}
	\dot{h}(t) = -2K_{h}\boldsymbol{z}^{\text{T}}_{2}\boldsymbol{J}^{-1}\left[\boldsymbol{\Omega}_{e} + \boldsymbol{u} + \boldsymbol{d} - \boldsymbol{J}\dot{\boldsymbol{\omega}}_{v}\right]
\end{equation}
Substituting the design of $\boldsymbol{u}$ given by equation (\ref{u}) into the expression, this gives rise to:
\begin{equation}\label{doth1}
	\begin{aligned}
		\dot{h}(t) &= -2K_{h}\boldsymbol{z}^{\text{T}}_{2}\boldsymbol{J}^{-1}\tilde{\boldsymbol{d}} - 2K_{h}(2\lambda_{u}K_{h} - K_{2})\boldsymbol{z}^{\text{T}}_{2}\boldsymbol{z}_{2}
	\end{aligned}
\end{equation}
where $\tilde{\boldsymbol{d}}\in\mathbb{R}^{3}$ has been previously defined as $\tilde{\boldsymbol{d}} = \boldsymbol{d} - \hat{\boldsymbol{d}}$, representing the estimation error of perturbations. Thus, according to the analysis presented in Subsection \ref{basicstable}, for the term expressed as $-2K_{h}\boldsymbol{z}^{\text{T}}_{2}\boldsymbol{J}^{-1}\tilde{\boldsymbol{d}}$, we have $-2K_{h}\boldsymbol{z}^{\text{T}}_{2}\boldsymbol{J}^{-1}\tilde{\boldsymbol{d}} \ge -2\frac{K_{h}\xi_{m}}{\lambda_{J\min}}\|\boldsymbol{z}_{2}\|$.

Substituting this result into equation (\ref{doth1}), it can be further derived that:
\begin{equation}\label{doth2}
	\begin{aligned}
		\dot{h}(t) &\ge -2\frac{K_{h}\xi_{m}}{\lambda_{J\min}}\|\boldsymbol{z}_{2}\|-4K^{2}_{h}\lambda_{u}\boldsymbol{z}^{\text{T}}_{2}\boldsymbol{z}_{2}+2K_{h}K_{2}\boldsymbol{z}^{\text{T}}_{2}\boldsymbol{z}_{2}\\
	\end{aligned}
\end{equation}
Since $\lambda_{u}>0$ holds for $B_{2}\neq 0$, while $\lambda_{u} = 0$ holds for $B_{2} = 0$, then we have $-4K^{2}_{h}\lambda_{u}\boldsymbol{z}^{\text{T}}_{2}\boldsymbol{z}_{2} \ge -\frac{4K^{2}_{h}}{\lambda^{2}_{J\min}}\lambda_{u}\left(\boldsymbol{J}\boldsymbol{z}_{2}\right)^{\text{T}}\left(\boldsymbol{J}\boldsymbol{z}_{2}\right)$.

Notably, the expression for $\lambda_{u}$ takes different forms depending on the value of $B_{2}$.
In the special case where $B_{2} = 0$, this implies $h(t) = K_{h}\Delta^{2}_{h} > 0$. From equation (\ref{doth2}), we obtain $\dot{h}(t) \ge 0 > -\gamma h(t)$. This indicates that the time response of $\boldsymbol{z}_{2}$ has been already confined within the tube: $\boldsymbol{z}_{2}:\left\{\|\boldsymbol{z}_{2}\| < \Delta_{h}\right\}$, and will continue to satisfy, as per derived PAP condition.
Thus, our main discussion will focus on the case where $B_{2} \neq 0$:

Since $B_{2} = \|2K_{h}\boldsymbol{J}\boldsymbol{z}_{2}\|^{2} = 4K^{2}_{h}\left(\boldsymbol{J}\boldsymbol{z}_{2}\right)^{\text{T}}\boldsymbol{J}\boldsymbol{z}_{2}$, we have:
	\begin{equation}
		\dot{h}(t) \ge -2\frac{K_{h}\xi_{m}}{\lambda_{J\min}}\|\boldsymbol{z}_{2}\|-\frac{1}{\lambda^{2}_{J\min}}B_{2}\lambda_{u}+2K_{2}K_{h}\boldsymbol{z}^{\text{T}}_{2}\boldsymbol{z}_{2}\\
	\end{equation}
	Note that $-B_{2}\lambda_{u} = A_{2}+\sqrt{A^{2}_{2}+q_{2}(B_{2})B_{2}} > A_{2}$, while $2K_{2}K_{h}\|\boldsymbol{z}_{2}\|^{2} \ge 0$ always holds. Recalling the definition of $A_{2}$ as shown in equation (\ref{A2B2}), we further derive:
	\begin{equation}
		\dot{h}(t) >  -\gamma h(t) + \left(\delta_{h}-2\frac{K_{h}\xi_{m}}{\lambda_{J\min}}\right)\|\boldsymbol{z}_{2}\|
	\end{equation}
	Let $\delta_{h} = \delta_{z}+\frac{2K_{h}\xi_{m}}{\lambda_{J\min}}$ with $\delta_{z}>0$ a design constant, it can be further obtained that for $B_{2}\neq 0$, we have:
	\begin{equation}\label{dothz}
		\dot{h}(t) > -\gamma h(t) + \delta_{z}\|\boldsymbol{z}_{2}\|
	\end{equation}
	Considering different conditions of $h(t)$'s initial value $h(0)$:
	\begin{enumerate}
	\item For $h(0)>0$, according to the relationship given by equation (\ref{dothz}), this indicates that the time response of $h(t)$ has been confined within the tube region and will be maintained, as per Lemma \ref{T1}.
	
	\item For $h(0)\leq 0$, it will take some time for $h(t)$ to become positive again. During this period, the following condition holds:
	\begin{equation}
		h(t) = K_{h}\left(\Delta^{2}_{h}-\|\boldsymbol{z}_{2}\|^{2}\right) \le 0 \implies \|\boldsymbol{z}_{2}\| \ge \Delta_{h}
	\end{equation}
	Therefore, the relationship given by equation (\ref{dothz}) becomes:
	\begin{equation}\label{dothsmallzero}
		\dot{h}(t) > -\gamma h(t) + \delta_{z} \Delta_{h}
	\end{equation}
	Integrating both sides of equation (\ref{dothsmallzero}) and applying the Comparison Lemma (Gronwall Inequality) \cite{khalil2002control}, we obtain:
	\begin{equation}\label{hcondition}
		h(t) > \left[h(0) - \frac{\delta_{z} \Delta_{h}}{\gamma}\right] e^{-\gamma t} + \frac{\delta_{z} \Delta_{h}}{\gamma}
	\end{equation}
	Considering the zeroing point of equation (\ref{hcondition}), let $T_{h}$ satisfy:
	\begin{equation}\label{Th1}
		\left[h(0) - \frac{\delta_{z} \Delta_{h}}{\gamma}\right] e^{-\gamma T_{h}} + \frac{\delta_{z} \Delta_{h}}{\gamma} = 0,
	\end{equation}
	it can be derived as:
	\begin{equation}
		T_{h} = \frac{1}{\gamma} \ln\left(\frac{\gamma |h(0)|}{\delta_{z} \Delta_{h}} + 1\right)
	\end{equation}

	\end{enumerate}

Therefore, for $h(0) \leq 0$, $h(t)$ will eventually satisfy $h(t) > 0$, ensuring that $\|\boldsymbol{z}_{2}\| < \Delta_{h}$ will be achieved for $t \in [T_{h}, +\infty]$, where $T_{h}$ is a time instant given by equation (\ref{Th1}).

\subsubsection{Attractivity Anslysis of the Tube Region $H(t)>0$}

Building upon the attractivity analysis of $\boldsymbol{z}_{2}$ to its tube region defined by $h(t)>0$, we now pursue our focus to the convergence behavior of $H(t)$, where $H(t)$ is defined in equation (\ref{Hfunction}), signifying the satisfaction of the tracking performance constraint.

 Taking the time-derivative of $H(t)$, it can be derived that:
\begin{equation}
	\begin{aligned}
		\dot{H}(t) &= -2K_{H}\boldsymbol{s}^{\text{T}}\left[\boldsymbol{F}_{e}\boldsymbol{\omega}_{e}-\dot{\boldsymbol{\rho}}\right]
	\end{aligned}
\end{equation}
Note that $\boldsymbol{\omega}_{e} = \boldsymbol{\omega}_{v}+\boldsymbol{z}_{2}$, substituting the design of $\boldsymbol{\omega}_{v}$ given by equation (\ref{omegav}) into the expression, this gives rise to:
\begin{equation}
	\dot{H}(t) = -2K_{H}\boldsymbol{s}^{\text{T}}\left(2\lambda_{v}K_{H}-K_{s}\right)\boldsymbol{s}-2K_{H}\boldsymbol{s}^{\text{T}}\boldsymbol{F}_{e}\boldsymbol{z}_{2}
\end{equation}
Following a similar analysis, if $B_{1}=0$, the system has reached the equilibrium point $\boldsymbol{s}=0$. Therefore, we will focus on the general condition where $B_{1}\neq 0$:

Since $B_{1} = 4K^{2}_{H}\boldsymbol{s}^{\text{T}}\boldsymbol{s}$, owing to the fact that $\|\boldsymbol{F}_{e}\|=\frac{1}{2}$, we can derive:
\begin{equation}
	\dot{H}(t) > -B_{1}\lambda_{v}-K_{H}\|\boldsymbol{s}\|\|\boldsymbol{z}_{2}\|
\end{equation}
Meanwhile, note that $-B_{1}\lambda_{v} = A_{1}+\sqrt{A^{2}_{1}+q_{1}(B_{1})B_{1}}>A_{1}$ holds for $B_{1}\neq 0$, recalling the definition of $A_{1}$ as shown in equation (\ref{A1B1}), it can be obtained that:
\begin{equation}\label{dotHcondition}
	\begin{aligned}
		\dot{H}(t) 
		&> -\alpha H(t) + \delta_{H}\|\tanh(C_{s}\boldsymbol{s})\|-K_{H}\|\boldsymbol{z}_{2}\|\|\boldsymbol{s}\|\\
		&>-\alpha H(t) + \left(\delta_{H}-K_{H}\|\boldsymbol{z}_{2}\|\right)\|\boldsymbol{s}\|
	\end{aligned}
\end{equation}
note that the above relationship can be derived since $\|\tanh(C_{s}\boldsymbol{s})\|>\|\boldsymbol{s}\|$ holds for sufficiently large constant $C_{s}$, according to Lemma \ref{lemmatanh}.
Therefore, it can be observed that the dynamics of $H(t)$ depends on the value of $\|\boldsymbol{z}_{2}\|$, hence the following analysis will be given considering different conditions of $h(0)$ and $H(0)$:
\begin{enumerate}
	\item For $h(0)>0$ and $H(0)>0$, according to the aforementioned analysis, we have $\|\boldsymbol{z}_{2}\| < \Delta_{h}$. Choosing the constant $\delta_{H}$ as $\delta_{H} = \delta_{S} + K_{H}\Delta_{h}$ with $\delta_{S} > 0$ a positive constant, it can be easily obtained from equation (\ref{dotHcondition}) that $\dot{H}(t) > -\alpha H(t) + \delta_{S}\|\boldsymbol{s}\| > -\alpha H(t)$ holds for $H(0)>0$. As per the PAP condition, this shows that the state trajectory of $\boldsymbol{s}$ will continue to be restricted in the tube region for $t\in[0,+\infty)$, satisfying $H(t)>0$. Such a condition is illustrated in the Subfigure 1 of Figure \ref{PAPcondition1}.
	
	\begin{figure}[hbt!]
	\centering 
	\includegraphics[width = 0.5\textwidth]{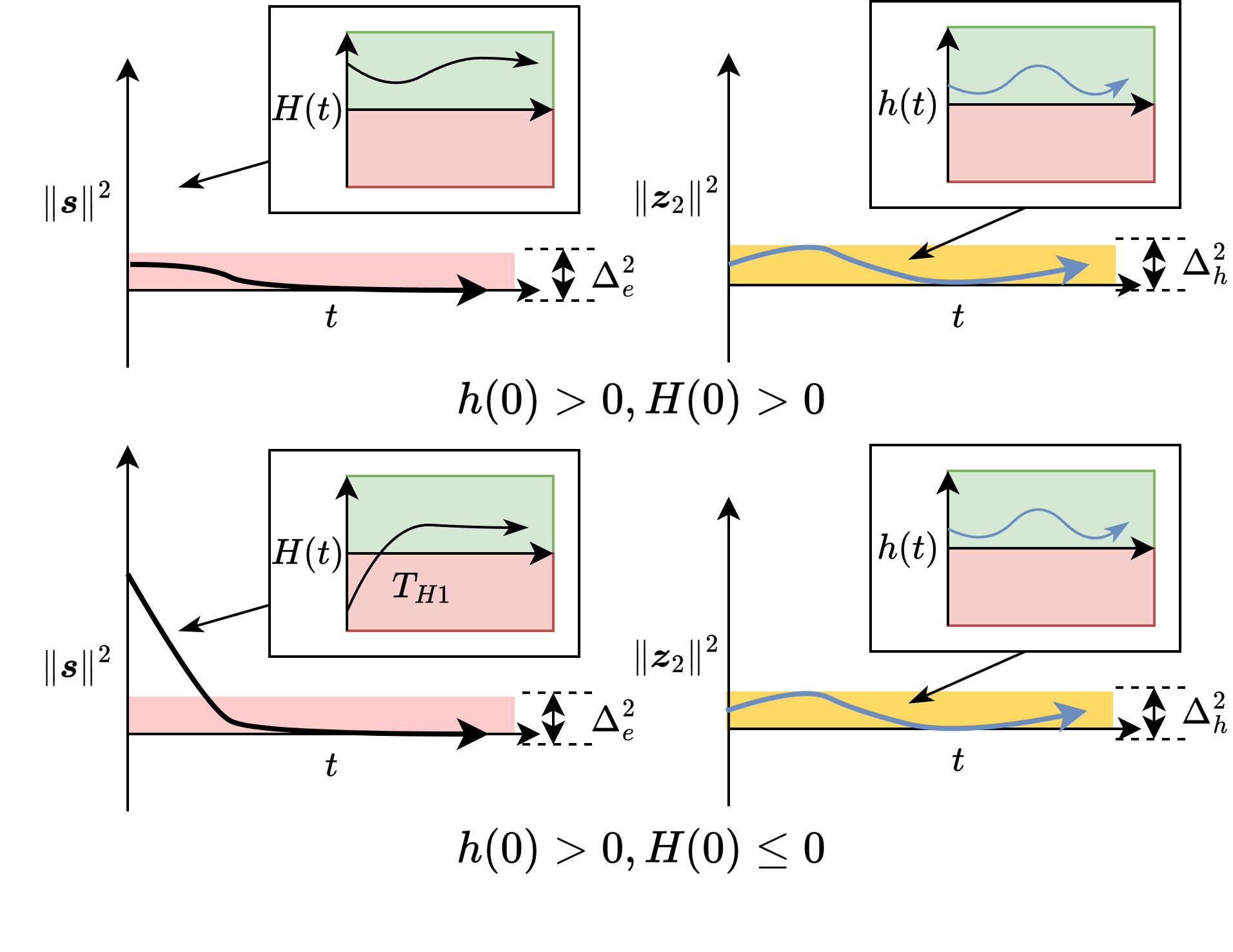}
	\caption{Illustration Explanation of the Condition 1 and 2}    
	\label{PAPcondition1}  
\end{figure}

\item For $h(0)>0$ but with $H(0)\leq 0$, this indicates that $\|\boldsymbol{s}(0)\|\ge\Delta_{e}$, meaning the system is performance-unsatisfied at the initial phase. However, by choosing $\delta_{H} = \delta_{S}+K_{H}\Delta_{h}$, where $\delta_{S}$ is a positive constant, it can be derived that:
\begin{equation}\label{dotH2}
	\dot{H}(t) > -\alpha H(t) + \delta_{S}\Delta_{e}
\end{equation}
Therefore, the time response of $H(t)$ satisfies:
\begin{equation}\label{Hcondition1}
	H(t) > \left[H(0)-\frac{\delta_{S}\Delta_{e}}{\alpha}\right]e^{-\alpha t} + \frac{\delta_{S}\Delta_{e}}{\alpha}
\end{equation}
Considering an time instant $T_{H1}$ that satisfies the equation: $\left[H(0)-\frac{\delta_{S}\Delta_{e}}{\alpha}\right]e^{-\alpha T_{H1}}+\frac{\delta_{S}\Delta_{e}}{\alpha} = 0$, then $T_{H1}$ can be expressed as:
\begin{equation}\label{TH}
	T_{H1} = \frac{1}{\alpha}\ln\left(\frac{\alpha|H(0)|}{\delta_{S}\Delta_{e}} + 1\right)
\end{equation}
Consequently, this indicates that for $t\in\left[T_{H1},+\infty\right)$, the system will satisfy $H(t)>0$, where $T_{H1}$ is specified by equation (\ref{TH}). Such a condition is illustrated in the Subfigure 2 of Figure \ref{PAPcondition1}.

\item For the case where $h(0) \leq 0$, we have proved that $h(t)$ will eventually satisfy $h(t) > 0$ after $t = T_{h}$. However, during the time interval $t \in [0, T_h]$, it should be noted that regardless of the initial condition $H(0)$ (either $H(0) > 0$ or $H(0) \leq 0$), there is still a possibility for $H(T_h)$ to be either $H(T_{h}) > 0$ or $H(T_{h}) \leq 0$, as shown in Figure \ref{PAPcondition3}.
 Therefore, we discuss the \textbf{worst-case} scenario , where $H(t) \le 0$ holds throughout $t \in [0, T_h]$.

	\begin{figure}[hbt!]
	\centering 
	\includegraphics[width = 0.5\textwidth]{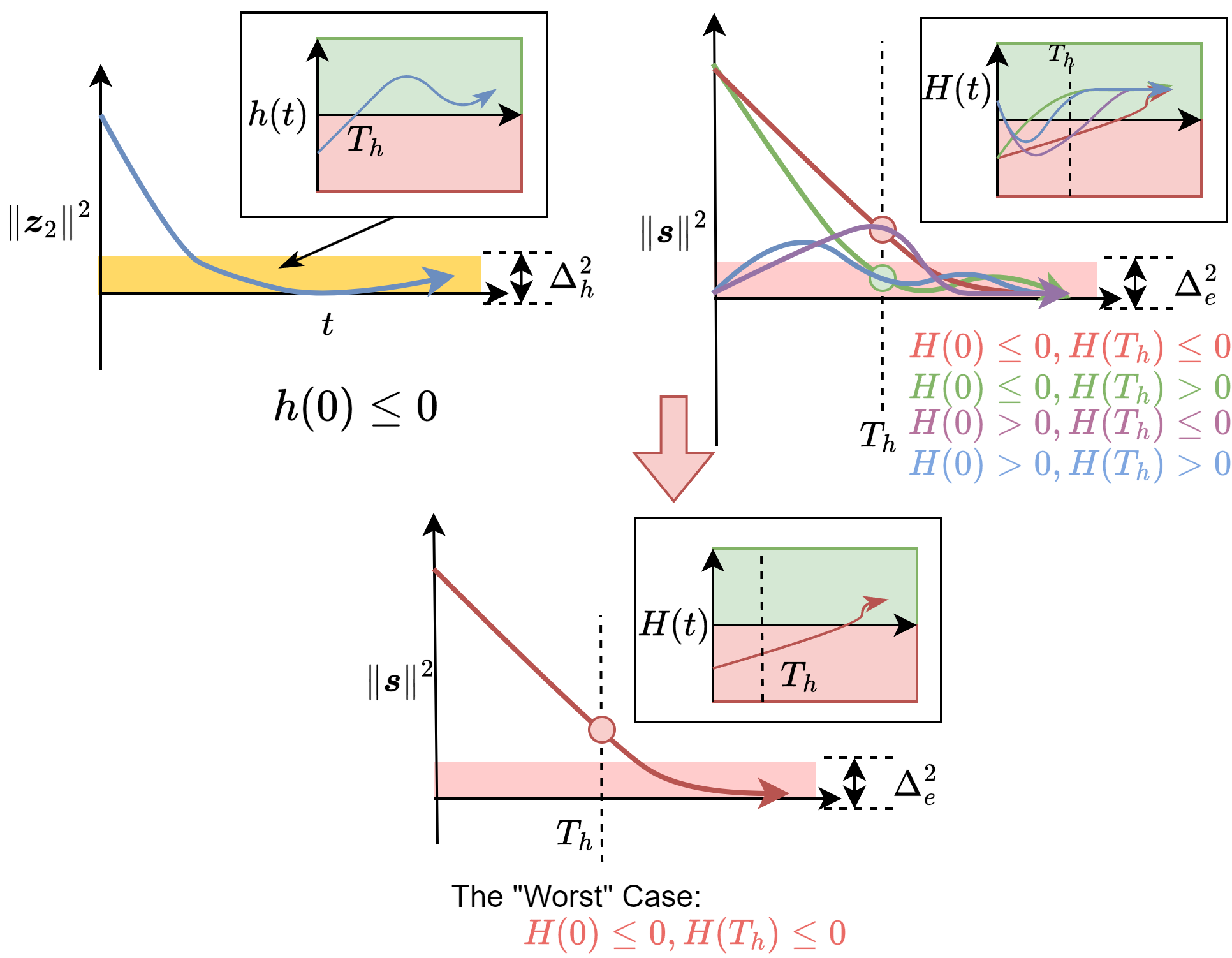}
	\caption{Illustration Explanation of the Condition 3 with Various Initial Conditions}    
	\label{PAPcondition3}  
\end{figure}

We start by discussing the behavior of $\|\boldsymbol{z}_{2}(t)\|$.
According to the aforementioned analysis, note that $\dot{h}(t) > -\gamma h(t) + \delta_{z}\Delta_{h}$ holds for $h(0)\le0$, hence we have:
\begin{equation}
	h(t) > \left[h(0) - \frac{\delta_{z}\Delta_{h}}{\gamma}\right]e^{-\gamma t} + \frac{\delta_{z}\Delta_{h}}{\gamma}
\end{equation}
Therefore, using the definition of $h(t)$, owing to the fact that $1 - e^{-\gamma t} \geq 0$, we have $\|\boldsymbol{z}_{2}(t)\|^{2} < \left(\|\boldsymbol{z}_{2}(0)\|^{2} - \Delta^{2}_{h}\right)e^{-\gamma t} + \Delta^{2}_{h}$, which gives:
\begin{equation}\label{z2response}
	\|\boldsymbol{z}_{2}(t)\| < Z_{1}e^{-\frac{\gamma t}{2}} + \Delta_{h}
\end{equation}
where $Z_{1}$ is defined as $Z_{1} \triangleq \sqrt{\|\boldsymbol{z}_{2}(0)\|^{2} - \Delta^{2}_{h}} > 0$ for brevity.

Meanwhile, according to equation (\ref{dotHcondition}), we have:
\begin{equation}
	\dot{H}(t) > -\alpha H(t)+(\delta_{S}+K_{H}\Delta_{h}-K_{H}\|\boldsymbol{z}_{2}\|)\|\boldsymbol{s}\|
\end{equation}

By combining this with the result in equation (\ref{z2response}), we find that $\delta_{S}+K_{H}(\Delta_{h}-\|\boldsymbol{z}_{2}\|)>\delta_{S}-K_{H}Z_{1}e^{-\frac{\gamma}{2}t}$.
Since we have assumed the worst case, where $H(t)\le0$ holds for $t\in[0,T_{h}]$, hence we always have $\|\boldsymbol{s}\|\ge\Delta_{e}$. Sorting these results, it leads to the following condition:
\begin{equation}\label{dotHcondition2}
	\dot{H}(t) > -\alpha H(t) + (\delta_{S}-K_{H}Z_{1}e^{-\frac{\gamma}{2}t})\Delta_{e}
\end{equation}
For writing convenience, let $m\triangleq\delta_{S}\Delta_{e}$ and $n\triangleq K_{H}Z_{1}\Delta_{e}$, equation (\ref{dotHcondition2}) can be rearranged as:
\begin{equation}\label{dotHnm}
	\dot{H}(t) > -\alpha H(t) + m - ne^{-\frac{\gamma}{2}t}
\end{equation}
The relationship shown in equation (\ref{dotHnm}) allows us to further discuss the lower bound of $H(t)$ on $t\in\left[0,T_{h}\right]$. The following discussion is given considering different relationships between $2\alpha$ and $\gamma$:

\begin{itemize}
	\item We first consider the general case, where $2\alpha \neq \gamma$, by applying the Comparison lemma and integrating both sides of equation (\ref{dotHnm}), it can be obtained that:
	\begin{equation}\label{Hlower}
		\begin{aligned}
			H(t) &> H(0)e^{-\alpha t} + \frac{m}{\alpha}(1-e^{-\alpha t}) \\
			&\quad - \frac{2n}{2\alpha-\gamma}(e^{-\frac{\gamma}{2}t}-e^{-\alpha t})
		\end{aligned}
	\end{equation}
From the above expression, it can be observed that the part " $\frac{m}{\alpha}(1-e^{-\alpha t})+H(0)e^{-\alpha t}$" is monotonically increasing. Meanwhile, note that $\frac{2n}{2\alpha-\gamma}(e^{-\frac{\gamma}{2}t}-e^{-\alpha t}) \geq 0$ permanently holds, it reaches its maxima at $t = \frac{1}{\frac{\gamma}{2}-\alpha}\ln\left(\frac{\gamma}{2\alpha}\right)$, while its maxima is given as:
	$\frac{n}{\alpha-\frac{\gamma}{2}}\exp\left(-\frac{\gamma}{\gamma-2\alpha}\ln\frac{\gamma}{2\alpha}\right)\left(1-\frac{\gamma}{2\alpha}\right)$.

	\item For the special case where $\gamma = 2\alpha$, the expression of equation (\ref{dotHnm}) becomes:
	\begin{equation}\label{dotH2gamma}
		\dot{H}(t) >-\alpha H(t)+m-ne^{-\alpha t}
	\end{equation}
	Similarly, by integrating on both sides of equation (\ref{dotH2gamma}), it can be yielded that:
	\begin{equation}
		\begin{aligned}
			H(t) > H(0)e^{-\alpha t}+\frac{m}{\alpha}\left(1-e^{-\alpha t}\right)-nte^{-\alpha t}
		\end{aligned}
	\end{equation}
For the tail term expressed as $n\cdot te^{-\alpha t}$, it reaches the maxima at $t=\frac{1}{\alpha}$, where $\max(nte^{-\alpha t})=\frac{1}{\alpha e}$ holds.
\end{itemize}

Consequently, sorting the results of these two conditions, we define $\mathcal{G}_{B}$ as:
\begin{equation}\label{GB}
	\begin{aligned}
		\mathcal{G}_{B} \triangleq 
		\begin{cases}
			\frac{n}{\alpha-\frac{\gamma}{2}}\exp\left(-\frac{\gamma}{\gamma-2\alpha}\ln\frac{\gamma}{2\alpha}\right)\left(1-\frac{\gamma}{2\alpha}\right),\quad\gamma\neq2\alpha\\
			\frac{1}{\alpha e},\quad\gamma=2\alpha
		\end{cases}
	\end{aligned}
\end{equation}
then it can be pointed out that:
\begin{equation}\label{Hlowerall}
	H(T_{h}) > H(0)e^{-\alpha T_{h}}+\frac{m}{\alpha}\left(1-e^{-\alpha T_{h}}\right)-\mathcal{G}_{B}\triangleq \mathcal{H}_{B}
\end{equation}
To facilitate the following analysis, we define the right side of equation (\ref{Hlowerall}) as $\mathcal{H}_{B}$, signifying the lower bound function of $H(t)$ at $t=T_{h}$.

	\begin{figure}[hbt!]
	\centering 
	\includegraphics[width = 0.5\textwidth]{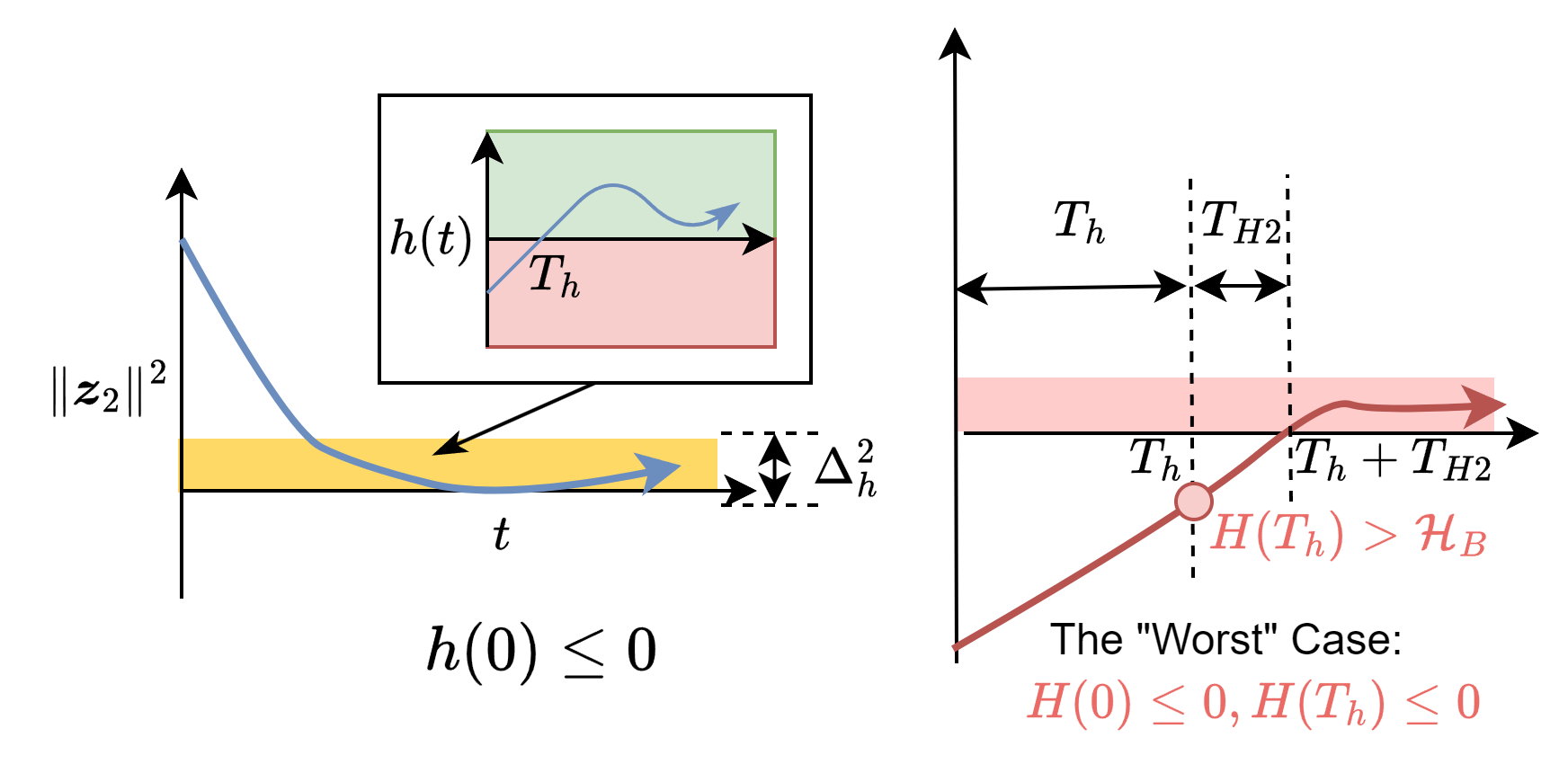}
	\caption{Illustration Explanation of $T_{h}$ and $T_{H2}$
	}    
	\label{PAPconditionlast}  
\end{figure}
According to our analysis in Subsection \ref{attracth}, $h(t)>0$ will be satisfied after $t=T_{h}$, even under the worst circumstances. This implies that $\|\boldsymbol{z}_{2}\|$ has converged to be sufficiently small, satisfying $\|\boldsymbol{z}_{2}\| < \Delta_{h}$. Thus, under such circumstances, by applying a similar analysis as shown in the second condition (referring to equation (\ref{dotH2})), it can be observed that $H(t)$ will increase from its position $H(T_{h})\le0$ since then, requiring at most an additional time $T_{H2}$ to satisfy $H(t)>0$, where $T_{H2}$ can be given as:
\begin{equation}
	\begin{aligned}
		T_{H2} &= \frac{1}{\alpha}\ln\left(\frac{\alpha|\mathcal{H}_{B}|}{\delta_{S}\Delta_{e}} + 1\right)
	\end{aligned}
\end{equation}
Consequently, for $h(0) \leq 0$, even under the worst-case scenario, the system will enter the tube region, satisfying $H(t) > 0$ for $t \in [T_{h}+T_{H2},+\infty)$.

In fact, "$T_{h}$" represents the upper bound of the time duration for $h(t)$ to converge from its initial position $h(0)\le0$ to $h(t)>0$, while "$T_{H2}$" represents the upper bound of the time duration for $H(t)$ to converge from its position at $T =T_{h}$, i.e., $H(T_{h})\le 0$, to satisfy $H(t)>0$. Additional explanation is presented in Figure \ref{PAPconditionlast}.
\end{enumerate}

This completes the proof of Theorem \ref{T2}.

\subsection{Discussion of Parameter Selection}
In this subsection, we discuss about the parameter selection suggestions of the proposed PAP control scheme. It should be noted that the parameters related to $A_{1}$, $A_{2}$, $B_{1}$ and $B_{2}$ are in the most key roles of the proposed control scheme.

Firstly, from equation (\ref{dothz}), it can be observed that $\gamma>0$ controls the varying rate of $h(t)$, as we have:
	\begin{equation}
	\dot{h}(t) > -\gamma h(t) + \delta_{z}\|\boldsymbol{z}_{2}\|
\end{equation}
Define the right side of the inequality as $h_{r} \triangleq -\gamma h(t) + \delta_{z} \|\boldsymbol{z}_{2}\|$. It can be observed that for $h(t) > 0$, a larger $\gamma$ results in faster convergence of $h_{r}(t)$, hence allowing $h(t)$ to approach the boundary defined by $h(t) = 0$ less conservatively. When $h(t) \leq 0$, a larger $\gamma$ results in a faster increase of $h_{r}(t)$, and hence $h(t)$ will also increase faster.

For practical considerations, increasing the values of $\gamma$, $\sigma_{2}$, and $K_{h}$ leads to a larger $A_{2}$ and $q_{2}(B_{2})$, which in turn increases the value of the varying gain: $|\lambda_{u}|$. Consequently, this significantly enhances the controller gain term $2\lambda_{u}K_{h} - K_{2}$ that governs the $\boldsymbol{z}_{2}$-system, which enables a more aggressive approach for the state trajectories $z_{2i}(t)$ toward their corresponding tube regions defined by $h(t) > 0$. Also, these parameters should yet not be set too small, as doing so may lead to performance degradation, particularly when facing perturbations. This suggests that a balance must be considered when selecting these parameters. A similar analysis can be applied to $\alpha$, $\sigma_{1}$, and $K_{H}$, which affect the function $H(t)$ in a similar manner.

Note that compared to the original expression of $A(t)$ as shown in \cite{sontag1989universal}, we have added a term expressed as $\delta_{H}\|\tanh(C_{s}\boldsymbol{s})\|$ in the design of $A_{1}(t)$. Since $\|\boldsymbol{s}\| \geq \Delta_{e}$ holds for $H(t) \leq 0$, this introduces a constant convergence term as $\delta_{H}\|\tanh(C_{s}\boldsymbol{s})\| \geq \delta_{H}\|\boldsymbol{s}\|\ge \delta_{H}\Delta_{e}$ holds. This modification allows us to discuss the upper bound of the time duration for $H(t)$ to converge and satisfy $H(t) > 0$. Notably, the parameter $\delta_{H}$ should be balanced, as an overly large $\delta_{H}$ may lead to a chattering problem due to overly large $|\lambda_{v}|$. Similar discussion is also applicable to $\delta_{h}$.


\section{Simulation and Analysis}
In this section, we present several groups of simulation results to validate the effectiveness of the proposed PAP control scheme. 

We first discuss the establishment of control scenarios.
In this section, the spacecraft is assumed to be a rigid-body spacecraft, where the inertia parameter is given as:
\begin{equation}
	\begin{aligned}
		\boldsymbol{J} = \begin{bmatrix}
			2.8&0.1&0.5\\
			0.1&2.5&0.24\\
			0.5&0.24&1.9
		\end{bmatrix}\left(\text{kg}\cdot\text{m}^{2}\right)
	\end{aligned}
\end{equation}
The normal external disturbance $\boldsymbol{d}\in\mathbb{R}^{3}$ is considered to be a time-varying periodic perturbation, expressed as follows:
\begin{equation}
	\boldsymbol{d} = 
	\begin{bmatrix}
		1\sin\left(3\omega_{\text{p}}t\right) + 4\cos\left(3\omega_{\text{p}}t\right) -20\\ 	
		5\sin\left(2\omega_{\text{p}}t\right) + 3\cos\left(3\omega_{\text{p}}t\right) +20\\ 	
		3\sin\left(2\omega_{\text{p}}t\right) - 1\cos\left(4\omega_{\text{p}}t\right) +20\\ 	
	\end{bmatrix}\times 10^{-4}\left(\text{N}\cdot\text{m}\right)
\end{equation}
where the angular rate parameter $\omega_{p}$ is set to $\omega_{p} = 0.01\text{rad}/\text{s}$. It is important to note that the magnitude of the disturbance on each axis is up to $2.5\times 10^{-3}\text{N}\cdot\text{m}$, which is much larger than an actual perturbation, and hence is enough for validating the robustness of the proposed method.

Although the control scheme does not explicitly consider the input saturation issue, we limit the maximum output control torque to $0.05\text{N}\cdot\text{m}$ for each axis, mainly for practical consideration.

Main parameters of the proposed PAP control scheme are given as follows: $K_{H} = 2$, $K_{h} = 1$, $K_{s} = 0.1$, $K_{2} = 2$, $\delta_{H} = 1\times 10^{-5}$, $C_{s} = 1\times 10^{7}$, $\delta_{h} = 2\times 10^{-3}$, $\alpha = 0.5$, $\beta = 1$, $\sigma_{1} = 0.05$, $\sigma_{2} = 1$.

\textbf{Precisely-Assigned Performance (PAP) Requirements: } The steady-state error should satisfy $|q_{evi}(t)| < 1\times 10^{-5}$, the desired settling time is $50\text{s}$, and the overshoot should be eliminated as small as possible. 

Our simulation is conducted using the MATLAB with a step of $0.1\text{s}$, and the control law calculation is performed outside the ODE45 integrator.

\subsection{Normal Case: Attitude Tracking Control}\label{normalcase}
Firstly, we demonstrate the fundamental effectiveness of the proposed method in an attitude tracking control scenario with normal magnitudes of perturbations. According to the given performance criteria, we set the width of the tube region as $\Delta_{e} = 1\times 10^{-5}$, $\Delta_{h} = 1\times 10^{-5}$, and the RPF is specified by $\rho_{i}(0) = q_{evi}(0) - 0.1$, $T_{sd} = 50\text{s}$, and $\rho_{\infty} = 0$.  
\begin{remark}
In order to validate that the state trajectory can be guided to the PAP-satisfied region with arbitrary initial conditions, we deliberately set the initial condition of the RPF as $\rho_{i}(0) = q_{evi}(0) - 0.1$. Such a design is to validate that the conservative assumption, which requires the state trajectory to be restricted in the performance-satisfied region as referenced in \cite{bechlioulis_robust_2008,bechlioulis_adaptive_2009}, is unnecessary for our PAP control scheme.
\end{remark}

The initial attitude quaternion is given as $\boldsymbol{q}_{s}(0) = \left[0.3482, 0.5222, 0.6963, 0.3482\right]^{\text{T}}$, and the initial angular velocity $\boldsymbol{\omega}_{s}$ is considered to be zero as $\boldsymbol{\omega}_{s}(0) = \boldsymbol{0}$. The desired reference attitude quaternion is determined by the initial target attitude quaternion $\boldsymbol{q}_{d}(0)$ and the reference angular velocity $\boldsymbol{\omega}_{d}$, given as:
\begin{equation}
	\begin{aligned}
		\boldsymbol{q}_{d}(0) &= \left[0, 0, 0, 1\right]^{\text{T}}\\
		\boldsymbol{\omega}_{d} &= 0.3 \times \left[\cos\left(\frac{t}{80}\right), \sin\left(\frac{t}{100}\right), -\cos\left(\frac{t}{100}\right)\right]^{\text{T}}
	\end{aligned}
\end{equation}
where the unit of $\boldsymbol{\omega}_{d}$ is $^{\circ}/\textbf{s}$.

\begin{figure}[hbt!]
	\centering 
	\includegraphics[width = 0.5\textwidth]{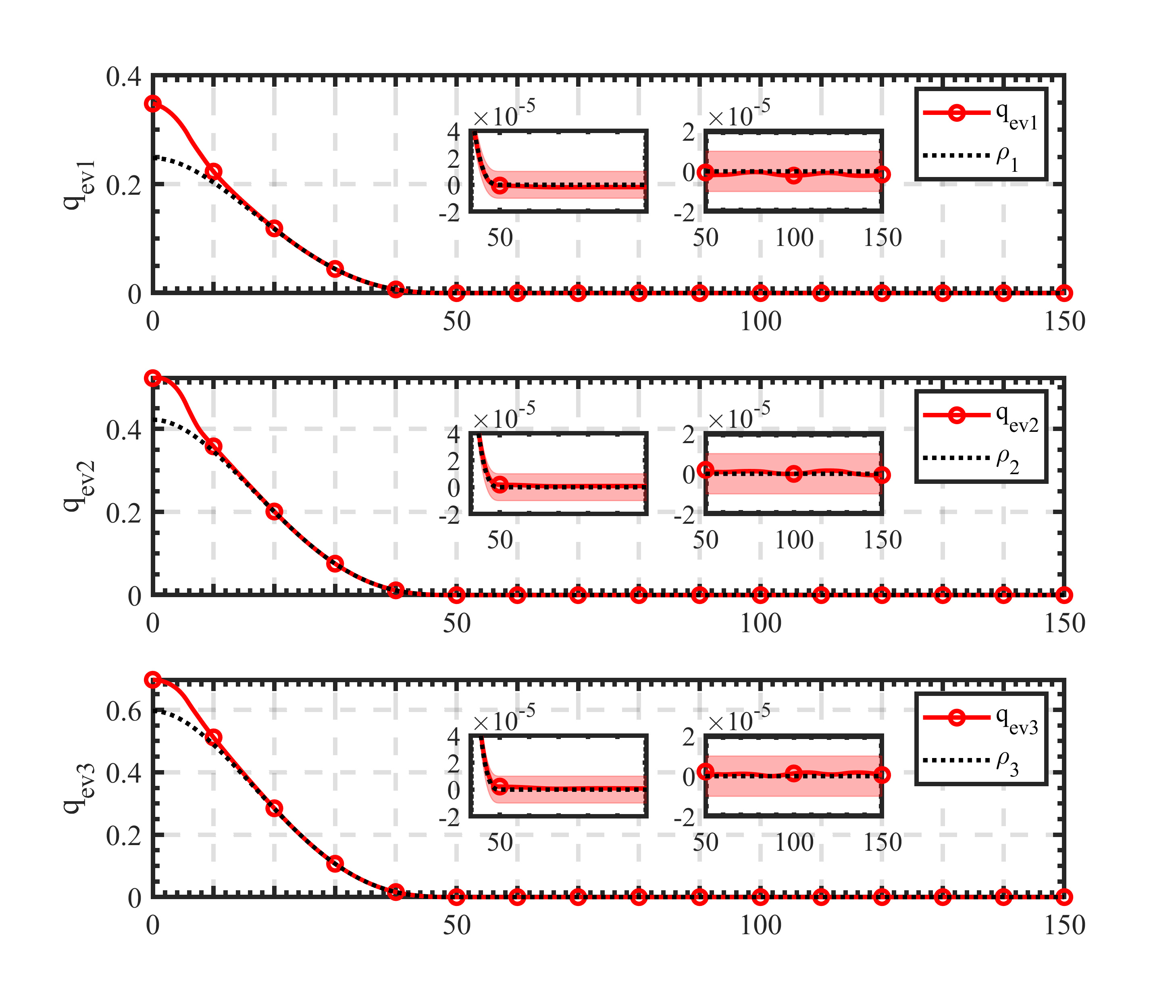}
	\caption{Time evolution of $q_{evi}(t)$ and $\rho_{i}(t)$ (Normal Case)}    
	\label{PAPC_QE}  
\end{figure}

The simulation results are presented in Figures \ref{PAPC_QE}, \ref{PAPC_WS}, \ref{PAPC_U}, and \ref{PAPC_HS}, respectively. 
Figure \ref{PAPC_QE} illustrates the time evolution of the attitude quaternion error $\boldsymbol{q}_{ev}$ and the RPF $\boldsymbol{\rho}$, denoted as $q_{evi}(t)$ and $\rho_{i}(t)$, respectively.
Each component of the time evolution of the attitude angular velocity $\omega_{si}(t)$ is then illustrated in Figure \ref{PAPC_WS}, while Figure \ref{PAPC_U} shows the time response of the control input $\boldsymbol{u}$. Figure \ref{PAPC_HS} further illustrates the time evolution of the Zeroing barrier functions, $H(t)$ and $h(t)$.

In Figure \ref{PAPC_QE}, the red-solid line represents the error response $q_{evi}(t)$, while the black-dotted line represents the RPF $\rho_{i}(t)$. The red-filled tube-like region represents the performance-satisfied region, specified by $|q_{evi}(t) - \rho_{i}(t)| < \Delta_{e}$. In Figure \ref{PAPC_WS}, the black-dotted line represents the time-evolution of the desired angular velocity $\boldsymbol{\omega}_{d}$.

From Figure \ref{PAPC_QE}, it can be observed that each $q_{evi}(t)$ rapidly approaches the RPF $\rho_{i}(t)$, implying the convergence of the tracking error variable $\boldsymbol{s}$.
In the zoomed-in figure of Figure \ref{PAPC_QE}, it can be observed that $\boldsymbol{q}_{ev}$ has converged and been confined within the tube region at around $t = 20$ s, ensuring that $\boldsymbol{q}_{ev}$ converges to the steady-state region at precisely $t = T_{sd} = 50$ s. Meanwhile, each $q_{evi}(t)$ is maintained to satisfy $|q_{evi}(t)|<1\times 10^{-5}$ for $t\in\left[T_{sd},+\infty\right)$. Therefore, this indicates that all the desired PAP constraints can be accurately satisfied.

From the results shown in Figure \ref{PAPC_WS}, the actual response of the attitude angular velocity $\omega_{si}(t)$ is able to track the given desired angular velocity $\omega_{di}(t)$. 
\begin{figure}[hbt!]
		\centering 
	\includegraphics[width = 0.5\textwidth]{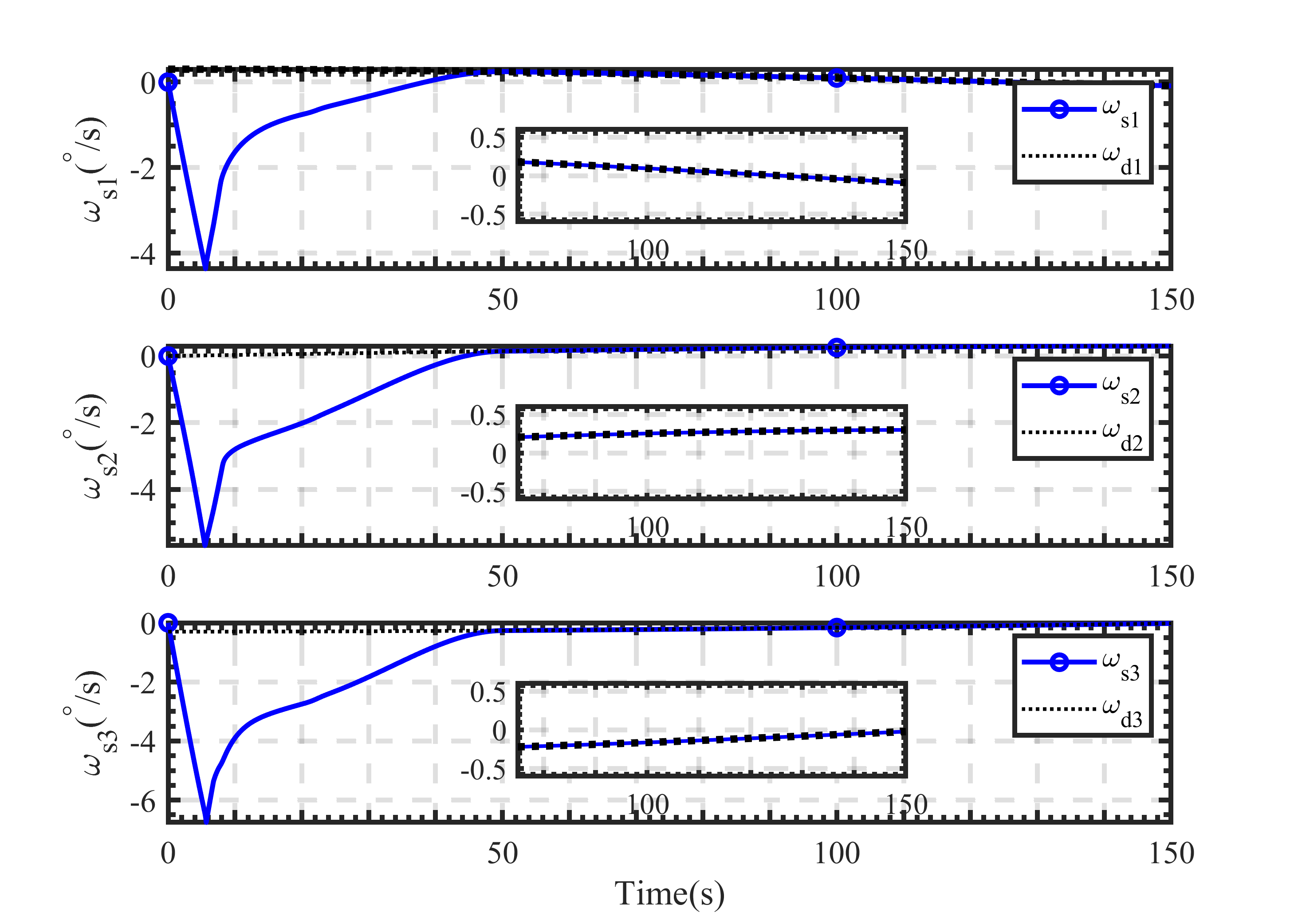}
	\caption{Time evolution of $\omega_{si}(t)$ (Normal Case)}    
	\label{PAPC_WS} 
	\centering 
	\includegraphics[width = 0.5\textwidth]{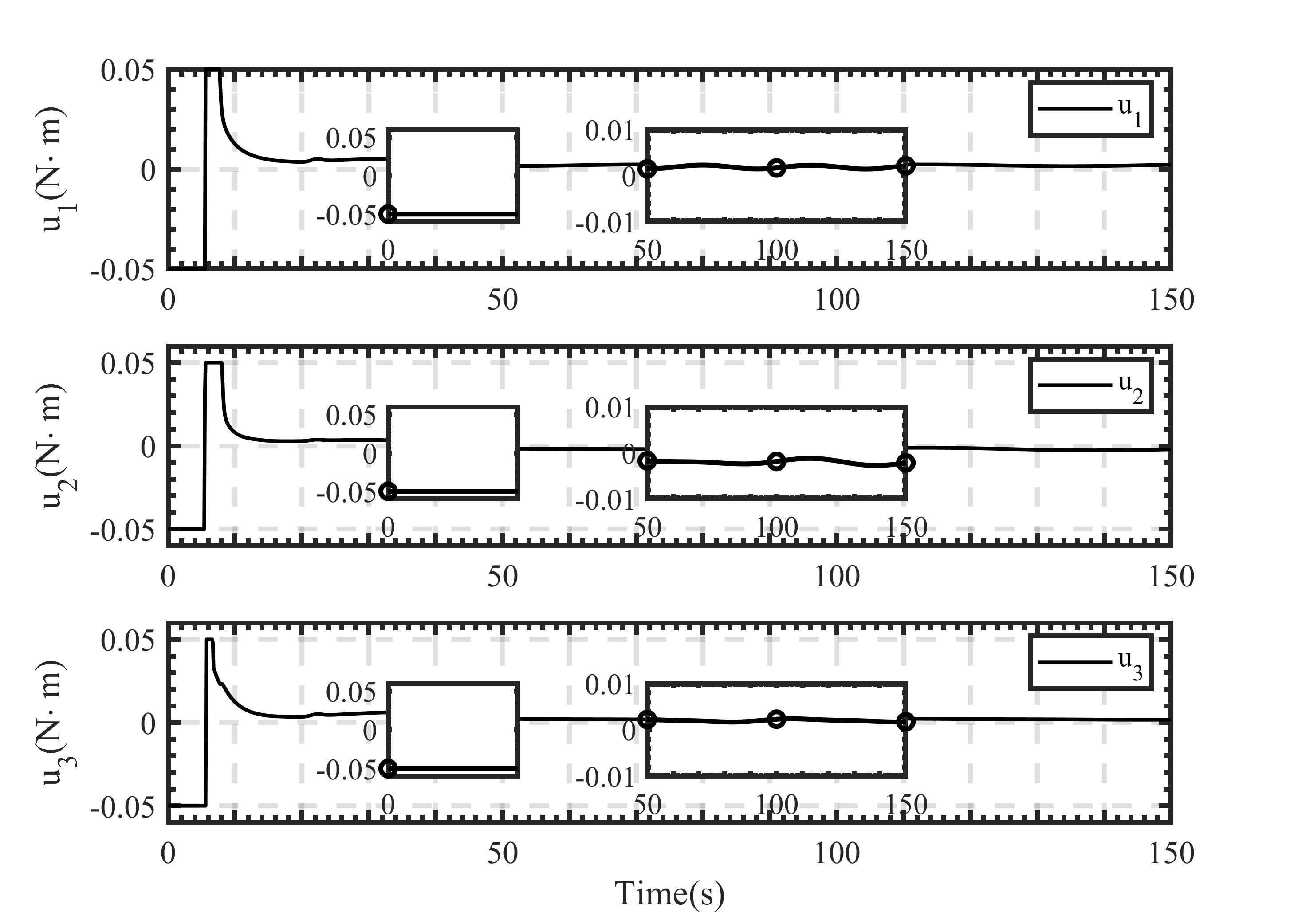}
	\caption{Time evolution of $u_{i}(t)$ (Normal Case)}    
	\label{PAPC_U} 
\end{figure}

From the results shown in Figure \ref{PAPC_U}, it can be observed that the calculated control input is relatively large at the initial phase, in order to rapidly eliminate the tracking error $\boldsymbol{z}_{2}$. However, once the error state tracks the reference performance function, the required control input decreases to be small, leads to mild control behavior.

\begin{figure}[hbt!]
				\centering 
	\includegraphics[width = 0.5\textwidth]{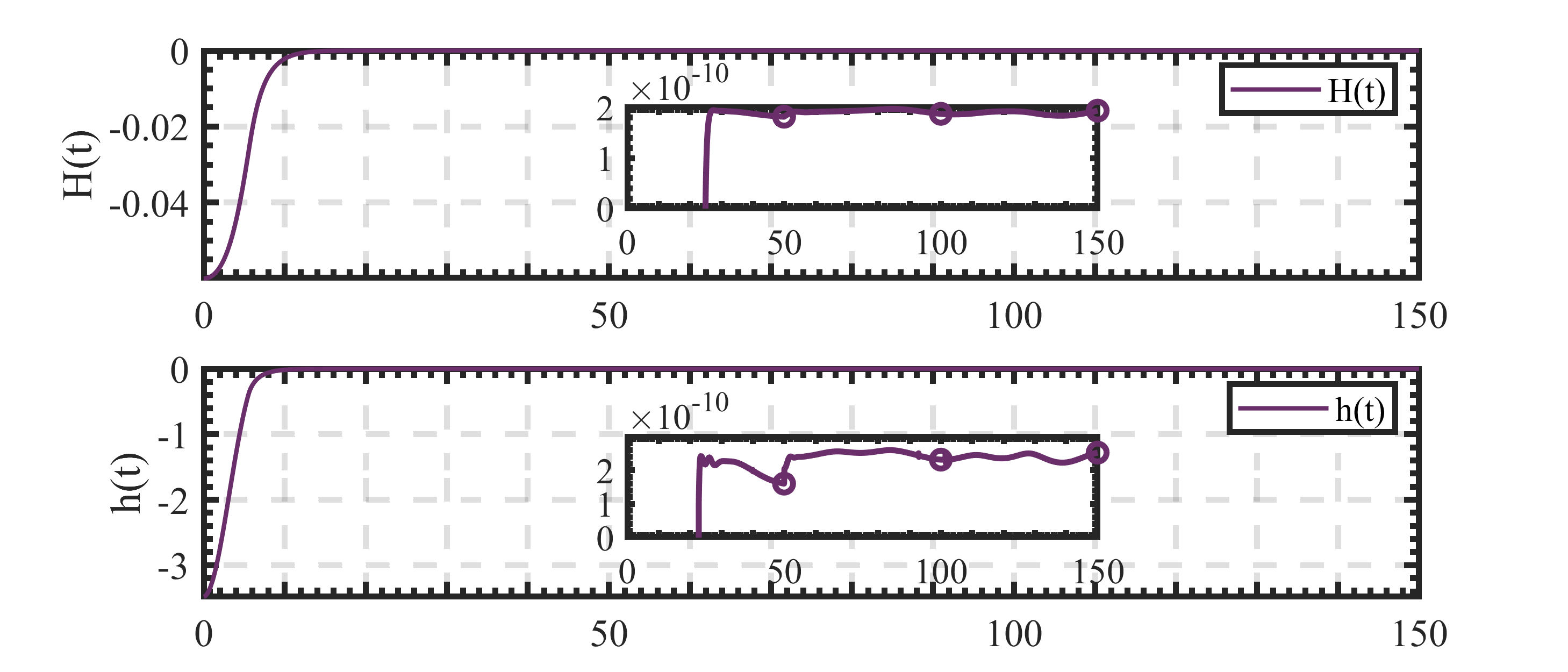}
	\caption{Time evolution of $H(t)$ and $h(t)$ (Normal Case)}    
	\label{PAPC_HS} 
\end{figure}
Further, from the time-evolution of $H(t)$ and $h(t)$ as shown in Figure \ref{PAPC_HS}, it can be observed that the time evolution of $h(t)$ rapidly converges, becoming positive at around $t = 20$ s. Once $h(t)$ becomes positive, it remains greater than 0 thereafter, as shown in the second zoom-in subfigure in Figure \ref{PAPC_HS}.

Regarding the time evolution of $H(t)$, it increases at a slower rate than $h(t)$. However, from the results shown in the first (upper) subfigure and zoom-in subfigure of Figure \ref{PAPC_HS}, $H(t) > 0$ holds permanently after it becomes positive.

Consequently, these results validate our analysis as provided in Subsection \ref{Attract}, showing that the designed controller is able to maintain or achieve the satisfaction of the tracking performance constraint as given by equation (\ref{trackingpercons}).

%

\subsection{Robustness Validation}
Next, we focus on the system's behavior under strong sudden disturbances, presenting a robustness validation considering non-zero initial angular velocity and severe sudden perturbations, showing that the proposed method is naturally singularity-free. 
\begin{figure}[hbt!]
	\centering 
	\includegraphics[width = 0.5\textwidth]{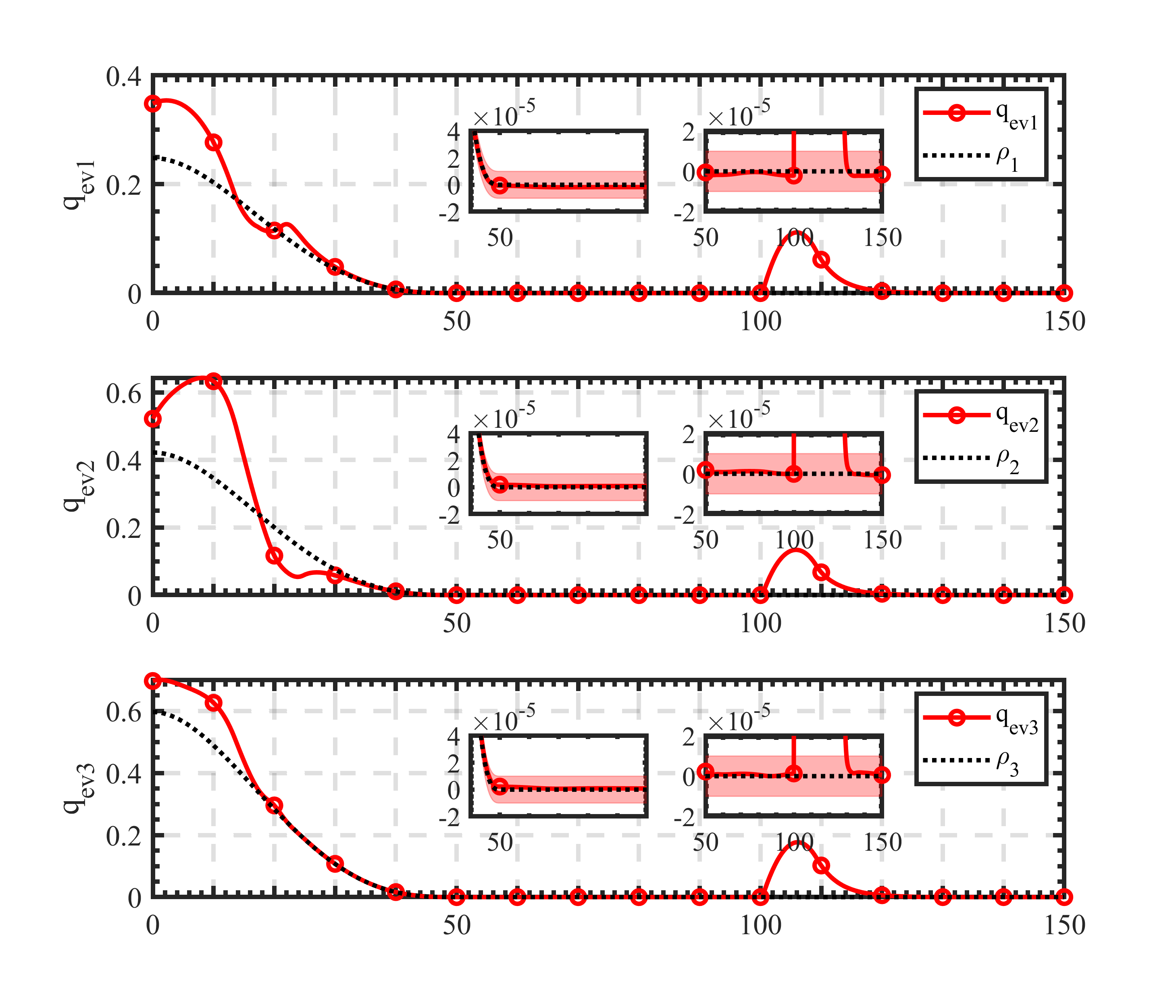}
	\caption{Time evolution of $q_{evi}(t)$ and $\rho_{i}(t)$ (Robust Validation)}    
	\label{PAPC_QEd}  
\end{figure}
The initial attitude is the same as in Subsection \ref{normalcase}, while the initial condition of the angular velocity is given as $\boldsymbol{\omega}_{s}(0) = 5\times\left[1,1,1\right]^{\text{T}}$, where the unit is $^\circ/\text{s}$. A large sudden disturbance $\boldsymbol{d}_{a} = 0.5\times \left[1,1,1\right]^{\text{T}}$ (N$\cdot$m) will be exerted on the system at $t = 100$ s, lasting for 0.5 s, hence heavily perturbing the spacecraft's attitude.

We focus on the time evolution of the attitude quaternion error $\boldsymbol{q}_{ev}$, the RPF $\boldsymbol{\rho}$, the angular velocity $\boldsymbol{\omega}_{s}$, the control input $\boldsymbol{u}$, and the time evolution of $h(t)$ and $H(t)$, illustrated in Figures \ref{PAPC_QEd}, \ref{PAPC_WSd}, \ref{PAPC_Ud}, and \ref{PAPC_HSd}, respectively.

From Figure \ref{PAPC_QEd}, it can be observed that since the initial attitude angular velocity is nonzero, it takes around $42\text{s}$ for the system to guide error states $q_{evi}(t)$ converge to the RPF $\rho_{i}(t)$. 
After the sudden disturbance is exerted at $t = 100\text{s}$, it can be observed that the attitude is heavily perturbed. However, shortly after $t = 125\text{s}$, the PAP control scheme re-stabilizes the system and still achieves the desired control accuracy, satisfying $|q_{evi}(t)| < 1\times 10^{-5}$, as shown in the zoom-in subfigures of Figure \ref{PAPC_QEd}.

\begin{figure}[hbt!]
		\centering 
	\includegraphics[width=0.5\textwidth]{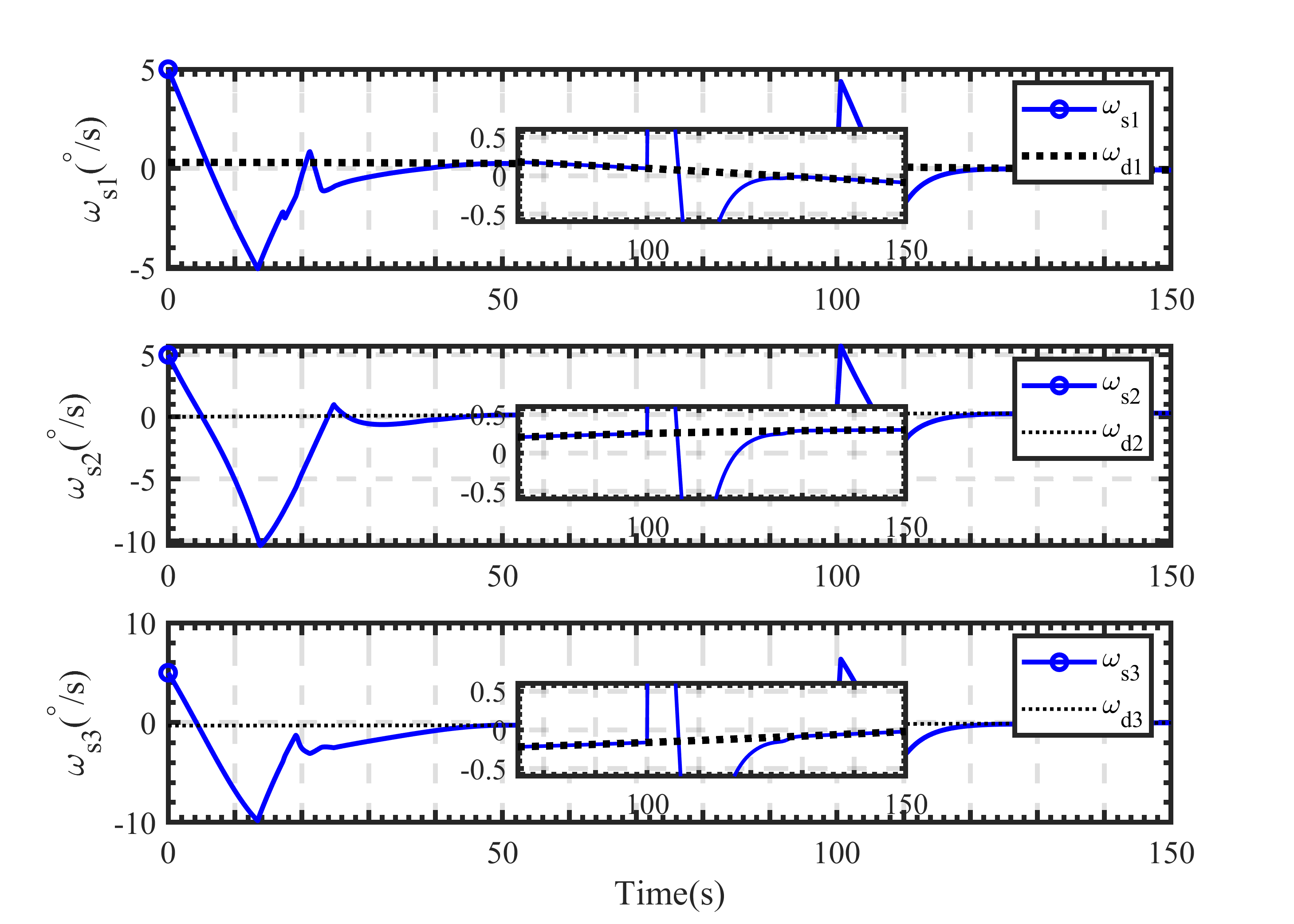}
	\caption{Time evolution of $\omega_{si}(t)$ (Robust Validation)}    
	\label{PAPC_WSd} 
			\centering 
\includegraphics[width=0.5\textwidth]{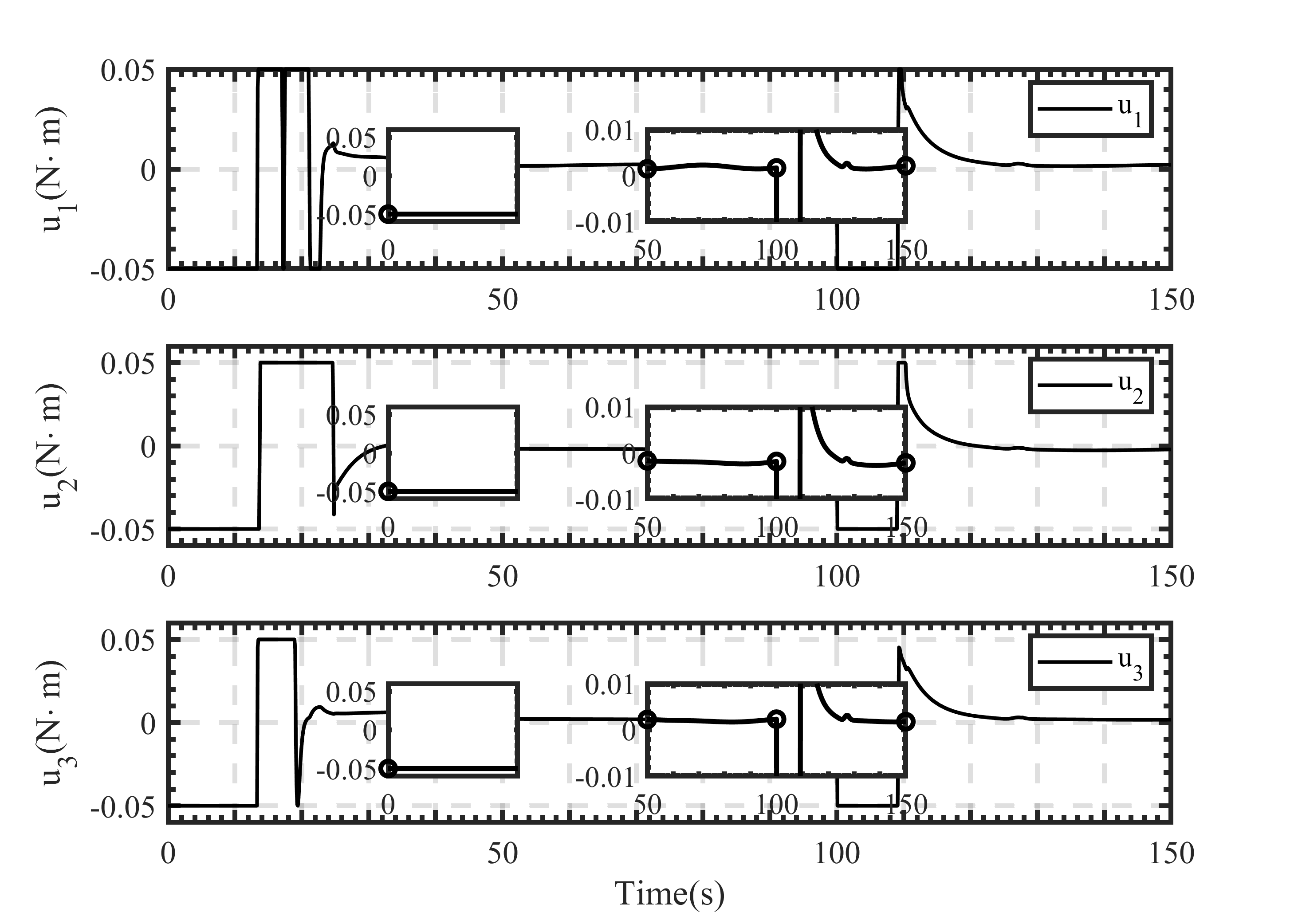}
\caption{Time evolution of $u_{i}(t)$ (Robust Validation)}    
\label{PAPC_Ud} 
\end{figure}

From the results of $\omega_{si}(t)$ and $u_{i}(t)$, as shown in Figures \ref{PAPC_WSd} and \ref{PAPC_Ud}, the nonzero initial attitude angular velocity condition requires the system to exert its maximum output torque of $0.05\text{N}\cdot\text{m}$ to stabilize the spacecraft. Also, it can be observed that the proposed controller stabilizes the system shortly after the occurrence of severe perturbation.
\begin{figure}[hbt!]
			\centering 
	\includegraphics[width=0.5\textwidth]{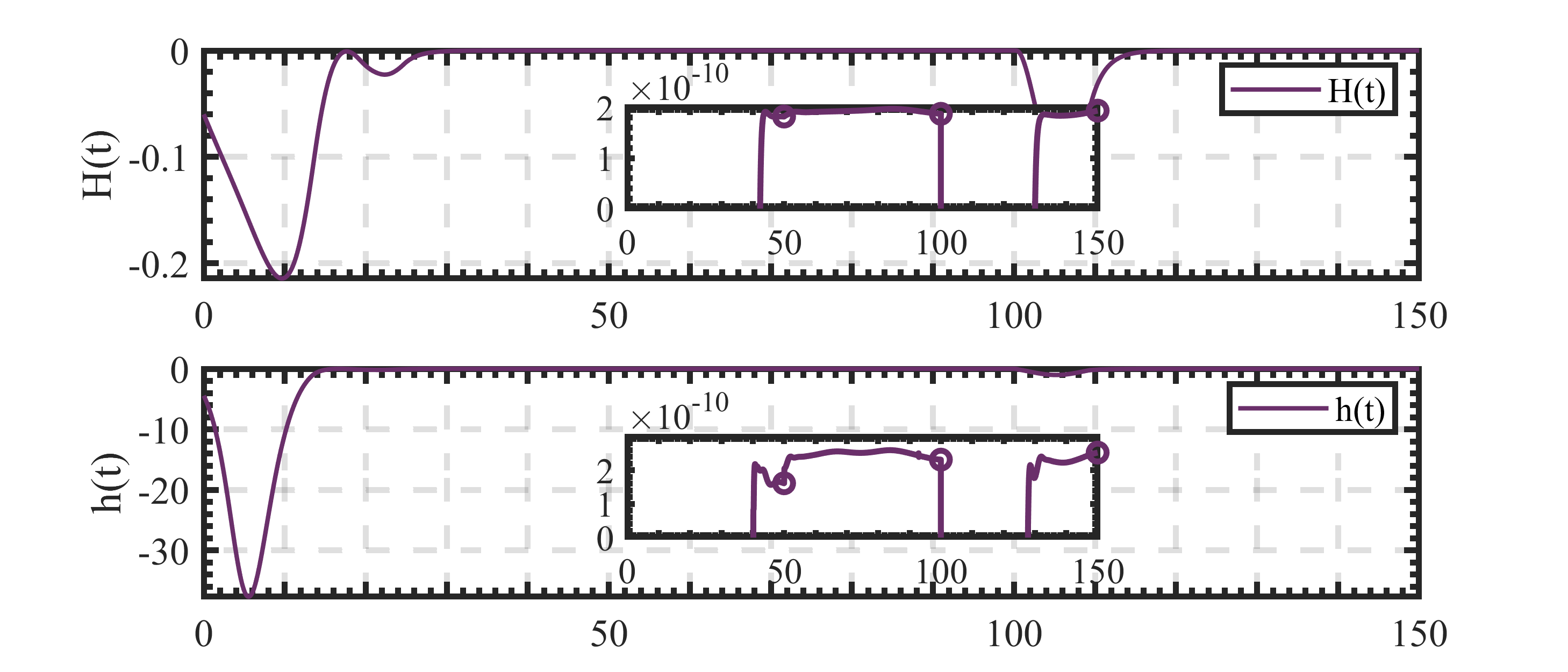}
	\caption{Time evolution of $H(t)$ and $h(t)$ (Robust Validation)}    
	\label{PAPC_HSd} 
\end{figure}

As shown in Figure \ref{PAPC_HSd}, even under a non-ideal initial condition, the time response of $H(t)$ and $h(t)$ increases to be positive before the sudden disturbance occurs. At $t = 100\text{s}$, the huge external disturbance heavily perturbs the system, making both $H(t)$ and $h(t)$ negative, as shown in the zoom-in subfigures of Figure \ref{PAPC_HSd}. Nevertheless, with the proposed PAP control scheme, the system is still able to be guided toward the performance-satisfied region, and finally achieving the desired PAP requirements eventually as $H(t)>0$ and $h(t)>0$ hold.

In this control scenario, a sudden disturbance is exerted on the system. For typical PPC control schemes, if the state trajectory is pulled out of the performance envelope by the perturbations, then it will fail to recover to the performance-satisfied region again. Also, such a process is often accompanied by chattering, or even the divergence of states, due to the well-known singularity problem. Such a problem has been mentioned and discussed in much literature. However, our proposed method is still able to ensure the fulfillment of the PAP requirements even under severe non-ideal conditions.


\subsection{Monte Carlo Simulation}
In this subsection, we present a Monte Carlo simulation campaign considering random initial phase of $\boldsymbol{q}_{ev}$. 

In order to provide a smooth and fast convergence behavior, the RPF is specified bys $\rho(0) = \boldsymbol{q}_{ev}(0)$; $T_{sd} = 50\text{s}$, and $\rho_{\infty} = 0$, the width of the tube is the same as in Subsection \ref{normalcase}. 
It is expected that the state trajectory is able to converge to the steady-state region accurately at $t=T_{sd}$, while achieving the desired accuracy requirements.
\begin{figure}[hbt!]
	\centering 
	\includegraphics[width=0.5\textwidth]{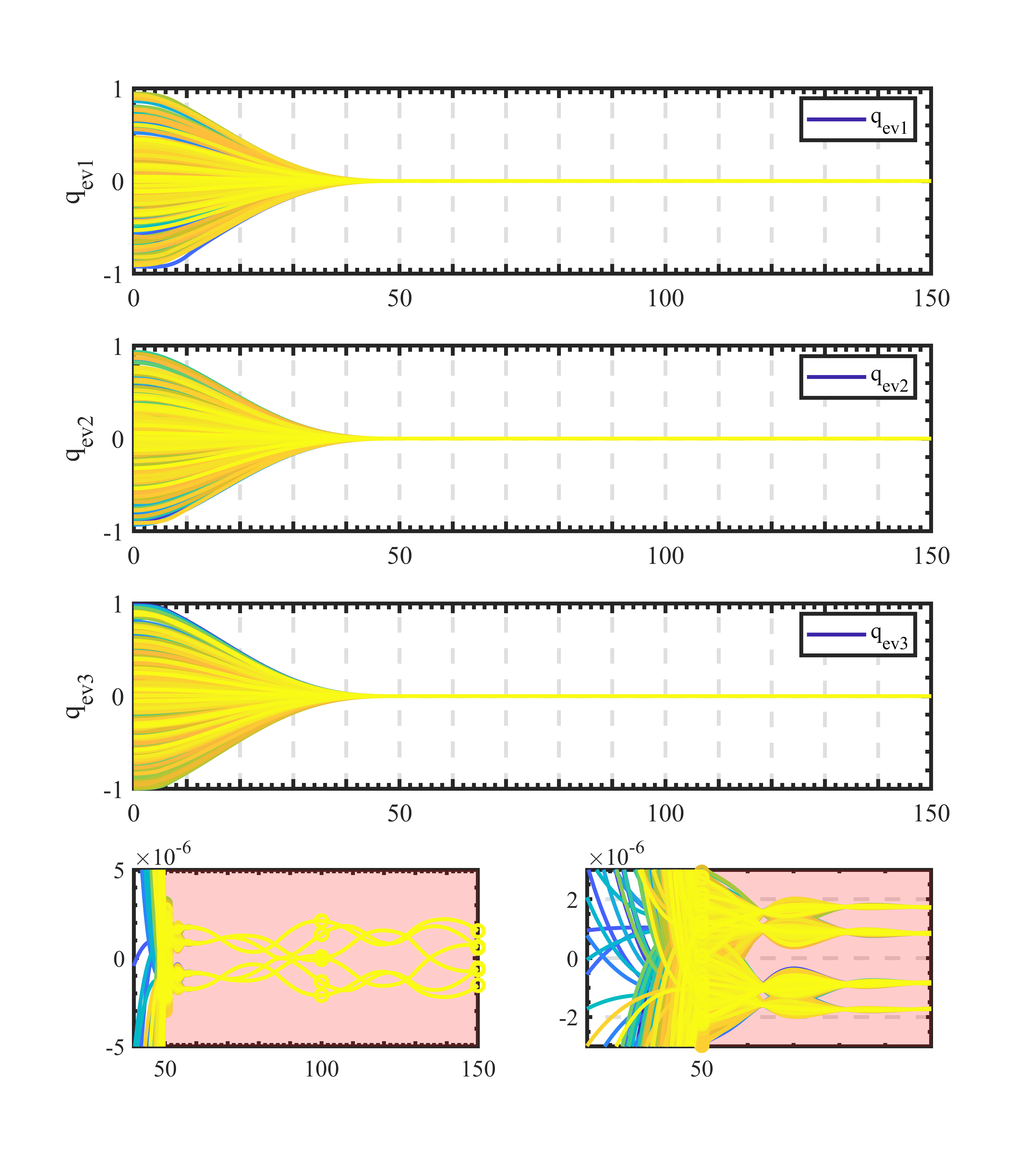}
	\caption{Time evolution of $\boldsymbol{q}_{ev}$ of all cases (Monte Carlo)}    
	\label{QEMC} 
\end{figure}

The simulation results are presented in Figures \ref{QEMC} and \ref{HhMC}. Figure \ref{QEMC} illustrates the time evolution of $\boldsymbol{q}_{ev}$ for all 1500 simulation cases, while the zoom-in figure shows the time evolution near $t=T_{sd}$, and its steady-state control accuracy. Figure \ref{HhMC} illustrates the time evolution of $H(t)$ and $h(t)$ for all cases.

From Figure \ref{QEMC}, it can be observed that all simulation cases are able to converge to satisfy $|q_{evi}(t)| < \Delta_{e} = 1\times 10^{-5}$ just near the assigned time instant $T_{sd} = 50\text{s}$, while its control accuracy is greater than $\max\left(q_{evi}(t)\right)<2\times 10^{-6}$, indicating that the performance criteria are achieved. Also, there is only tiny overshooting occurring during the converging process. From Figure \ref{HhMC}, note that the values of $h(t)$ and $H(t)$ remain to satisfy $h(t)>0$ and $H(t)>0$ for all cases, showing the capability of the proposed control scheme when facing various conditions.

\begin{figure}[hbt!]
		\centering 
	\includegraphics[width=0.5\textwidth]{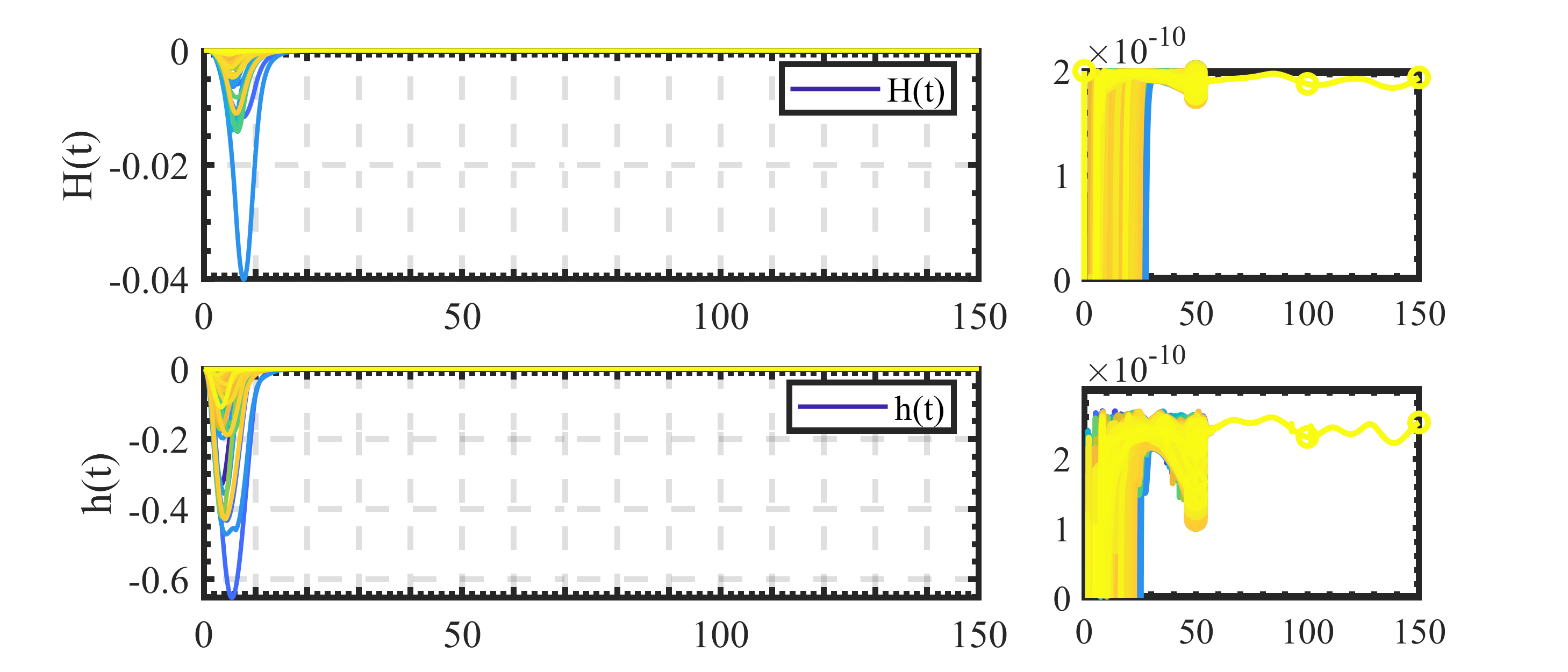}
	\caption{Time evolution of $H(t)$ and $h(t)$ of all cases (Monte Carlo)}    
	\label{HhMC} 
\end{figure}

\section{Conclusions}
This paper investigates the attitude tracking control problem of spacecraft, aiming to achieve precisely-assigned performance criteria, including precise settling time, precisely-assigned steady-state error bound, and overshoot elimination. To address this issue, we proposed a PAP control scheme, the key logic of which is to guide state trajectories into performance-satisfied regions centered around designated reference performance functions. The concept of control barrier function and Sontag's universal formula for stabilization are introduced to lead the system to satisfy the necessary condition for realizing the attraction of error states, providing an easy-to-implement controller structure that can be regarded as a backstepping controller with time-varying gain. Simulation results have validated the effectiveness of the proposed PAP control scheme, which is able to re-stabilize and achieve precisely-assigned performance criteria even under severe perturbations.

\section{Appendix: Convergence Proof of the Observer Design}\label{append}

\begin{proof}
	Considering the observer design outlined in equation (\ref{observerdesign}), we derive its resulting error dynamics as follows:
	\begin{equation}\label{observererror}
		\begin{aligned}
			\dot{\boldsymbol{e}}_{1} &= -C_{1}\beta\boldsymbol{e}_{1} + \boldsymbol{e}_{2},\quad
			\dot{\boldsymbol{e}}_{2} = -C_{2}\beta^{2}\boldsymbol{e}_{1}-\boldsymbol{\xi}
		\end{aligned}
	\end{equation}
	Introducing an auxiliary state $\boldsymbol{\varepsilon} = \left[\beta\boldsymbol{e}^{\text{T}}_{1},\boldsymbol{e}^{\text{T}}_{2}\right]^{\text{T}}\in\mathbb{R}^{6}$, the error system (\ref{observererror}) can be rearranged as $\dot{\boldsymbol{\varepsilon}} = \boldsymbol{P}_{2}\boldsymbol{\varepsilon}+\boldsymbol{G}$, where $\boldsymbol{P}_{2}\in\mathbb{R}^{6\times 6}$ and $\boldsymbol{G}\in\mathbb{R}^{6}$ are defined as:
	\begin{equation}
		\boldsymbol{P}_{2}\triangleq
		\begin{bmatrix}
			-C_{1}\boldsymbol{I}_{3} & \boldsymbol{I}_{3}\\
			-C_{2}\beta\boldsymbol{I}_{3} & \boldsymbol{0}_{3}
		\end{bmatrix},\quad 
		\boldsymbol{G}\triangleq 
		\begin{bmatrix}
			0\\-\boldsymbol{\xi}
		\end{bmatrix}
	\end{equation}
	Since $\boldsymbol{P}_{2}$ is Hurwitz with $C_{1}>0$, $C_{2}>0$, and $\beta>0$, a positive-definite symmetric matrix $\boldsymbol{Q}_{2} = \boldsymbol{Q}^{\text{T}}_{2}>0$ can be found such that $\boldsymbol{P}^{\text{T}}_{2}\boldsymbol{Q}_{2}+\boldsymbol{Q}_{2}\boldsymbol{P}_{2}=-c_{P}\boldsymbol{I}_{6}$ for some $c_{P}>0$.
	
	Considering a Lyapunov candidate $V_{\varepsilon} = \boldsymbol{\varepsilon}^{\text{T}}\boldsymbol{Q}_{2}\boldsymbol{\varepsilon}$, its time-derivative yields:
	\begin{equation}\label{V_eps}
		\begin{aligned}
			\dot{V}_{\varepsilon} 
			&\le -c_{P}\boldsymbol{\varepsilon}^{\text{T}}\boldsymbol{\varepsilon} + 2\lambda_{Q2\max}\|\boldsymbol{\varepsilon}\|\|\boldsymbol{\xi}\|
		\end{aligned}
	\end{equation}
	where $\lambda_{Q2\max}$ denotes the maximum eigenvalue of $\boldsymbol{Q}_{2}$.
	Utilizing the Young's Inequality, we obtain:
	\begin{equation}\label{dotV3}
		\begin{aligned}
			\dot{V}_{\varepsilon} &\le -\frac{c_{P}}{\lambda_{Q2\max}}V_{\varepsilon}+2\lambda_{Q2\max}\|\boldsymbol{\varepsilon}\|\|\boldsymbol{\xi}\|\\
			&\le -\left[\frac{c_{P}}{\lambda_{Q2\max}}-\frac{\lambda^{2}_{Q2\max}\zeta^{2}_{m}}{h\lambda_{Q2\min}}\right]V_{\varepsilon}+h_{m}
		\end{aligned}
	\end{equation}
	where $h_{m}>0$ is a small positive constant, $\zeta_{m}>0$ is a positive constant satisfying $\|\boldsymbol{\xi}\|\le \zeta_{m}$. Regarding Assumption \ref{Assd}, since we have assumed $\|\dot{\boldsymbol{d}}\| \le \xi_{d}$, $\zeta_{m}$ can be given accordingly.
	
	By choosing $\boldsymbol{Q}_{2}$ appropriately, one can ensure $\dot{V}_{\varepsilon} \le -C_{d}V_{\varepsilon}+h_{m}$, where $C_{d}>0$ is a design constant.
	Consequently, $V_{\varepsilon}$ converges exponentially, and eventually converges into a compact set $\Omega_{\varepsilon}$, defined as:
	\begin{equation}
		\Omega_{\varepsilon} \triangleq \left\{\boldsymbol{\varepsilon}\quad|\quad\|\boldsymbol{\varepsilon}\|<\sqrt{\frac{h_{m}}{\chi C_{d}\lambda_{Q2\min}}}\right\}
	\end{equation}
	where $\chi\in\left(0,1\right)$ a small constant.
	From $\boldsymbol{\varepsilon}$, it follows that $\|\boldsymbol{e}_{2}\|\le \|\boldsymbol{\varepsilon}\|$, therefore the estimation error satisfies $\|\tilde{\boldsymbol{d}}\| \le \lambda_{J\max}\|\boldsymbol{e}_{2}\| \le \lambda_{J\max}\|\boldsymbol{\varepsilon}\|$, and we have:
	\begin{equation}\label{Dedef}
		\|\tilde{\boldsymbol{d}}\| \le \lambda_{J\max}\sqrt{\frac{h_{m}}{\chi C_{d}\lambda_{Q2\min}}} \triangleq D_{e}
	\end{equation}
	This indicates the boundedness of the observer estimation error and thereby proves the convergence of the observer.
\end{proof}

\bibliographystyle{IEEEtran}
\bibliography{IEEEabrv,PAPCref.bib}

\begin{thebibliography}{10}
\providecommand{\url}[1]{#1}
\csname url@samestyle\endcsname
\providecommand{\newblock}{\relax}
\providecommand{\bibinfo}[2]{#2}
\providecommand{\BIBentrySTDinterwordspacing}{\spaceskip=0pt\relax}
\providecommand{\BIBentryALTinterwordstretchfactor}{4}
\providecommand{\BIBentryALTinterwordspacing}{\spaceskip=\fontdimen2\font plus
\BIBentryALTinterwordstretchfactor\fontdimen3\font minus
  \fontdimen4\font\relax}
\providecommand{\BIBforeignlanguage}[2]{{%
\expandafter\ifx\csname l@#1\endcsname\relax
\typeout{** WARNING: IEEEtran.bst: No hyphenation pattern has been}%
\typeout{** loaded for the language `#1'. Using the pattern for}%
\typeout{** the default language instead.}%
\else
\language=\csname l@#1\endcsname
\fi
#2}}
\providecommand{\BIBdecl}{\relax}
\BIBdecl

\bibitem{hu2017continuous}
\BIBentryALTinterwordspacing
Q.~Hu and B.~Jiang, ``Continuous finite-time attitude control for rigid
  spacecraft based on angular velocity observer,'' \emph{IEEE Transactions on
  Aerospace and Electronic Systems}, vol.~54, no.~3, pp. 1082--1092, 2017.
  [Online]. Available: \url{https://doi.org/10.1109/TAES.2017.2773340}
\BIBentrySTDinterwordspacing

\bibitem{zou2019finite}
\BIBentryALTinterwordspacing
A.-M. Zou and K.~D. Kumar, ``Finite-time attitude control for rigid spacecraft
  subject to actuator saturation,'' \emph{Nonlinear Dynamics}, vol.~96, pp.
  1017--1035, 2019. [Online]. Available:
  \url{https://doi.org/10.1007/s11071-019-04836-7}
\BIBentrySTDinterwordspacing

\bibitem{zou2020fixed}
\BIBentryALTinterwordspacing
A.-M. Zou, K.~D. Kumar, and A.~H. de~Ruiter, ``Fixed-time attitude tracking
  control for rigid spacecraft,'' \emph{Automatica}, vol. 113, p. 108792, 2020.
  [Online]. Available: \url{https://doi.org/10.1016/j.automatica.2019.108792}
\BIBentrySTDinterwordspacing

\bibitem{huang2018fixed}
\BIBentryALTinterwordspacing
B.~Huang, A.-j. Li, Y.~Guo, and C.-q. Wang, ``Fixed-time attitude tracking
  control for spacecraft without unwinding,'' \emph{Acta Astronautica}, vol.
  151, pp. 818--827, 2018. [Online]. Available:
  \url{https://doi.org/10.1016/j.actaastro.2018.04.041}
\BIBentrySTDinterwordspacing

\bibitem{sanchez-torres_discontinuous_2014}
\BIBentryALTinterwordspacing
J.~D. S{\'a}nchez-Torres, E.~N. Sanchez, and A.~G. Loukianov, ``A discontinuous
  recurrent neural network with predefined time convergence for solution of
  linear programming,'' in \emph{2014 IEEE symposium on swarm intelligence},
  2014, pp. 1--5. [Online]. Available:
  \url{https://doi.org/10.1109/SIS.2014.7011799}
\BIBentrySTDinterwordspacing

\bibitem{xie_adaptive_2022}
\BIBentryALTinterwordspacing
S.~Xie and Q.~Chen, ``Adaptive nonsingular predefined-time control for attitude
  stabilization of rigid spacecrafts,'' \emph{IEEE Trans. Circuits Syst.},
  vol.~69, pp. 189--193, 2021. [Online]. Available:
  \url{https://doi/org/10.1109/TCSII.2021.3078708}
\BIBentrySTDinterwordspacing

\bibitem{wang2020attitude}
\BIBentryALTinterwordspacing
F.~Wang, Y.~Miao, C.~Li, and I.~Hwang, ``Attitude control of rigid spacecraft
  with predefined-time stability,'' \emph{Journal of the Franklin Institute},
  vol. 357, no.~7, pp. 4212--4221, 2020. [Online]. Available:
  \url{https://doi.org/10.1016/j.jfranklin.2020.01.001}
\BIBentrySTDinterwordspacing

\bibitem{bechlioulis_robust_2008}
\BIBentryALTinterwordspacing
C.~P. Bechlioulis and G.~A. Rovithakis, ``Robust adaptive control of feedback
  linearizable mimo nonlinear systems with prescribed performance,''
  \emph{IEEE. T. Automat. Contr}, vol.~53, no.~9, pp. 2090--2099, 2008.
  [Online]. Available: \url{https://doi.org/10.1109/TAC.2008.929402}
\BIBentrySTDinterwordspacing

\bibitem{bechlioulis2011robust}
\BIBentryALTinterwordspacing
------, ``Robust partial-state feedback prescribed performance control of
  cascade systems with unknown nonlinearities,'' \emph{IEEE Transactions on
  Automatic Control}, vol.~56, no.~9, pp. 2224--2230, 2011. [Online].
  Available: \url{https://doi.org/10.1109/TAC.2011.2157399}
\BIBentrySTDinterwordspacing

\bibitem{wei_2021_overview}
\BIBentryALTinterwordspacing
C.~Wei, Q.~Chen, J.~Liu, Z.~Yin, and J.~Luo, ``An overview of prescribed
  performance control and its application to spacecraft attitude system,''
  \emph{P. I. Mech. Eng. I-J. Sys}, vol. 235, pp. 435--447, 2021. [Online].
  Available: \url{https://doi.org/10.1109/TCSII.2021.3078708}
\BIBentrySTDinterwordspacing

\bibitem{bu2023prescribed}
\BIBentryALTinterwordspacing
X.~Bu, ``Prescribed performance control approaches, applications and
  challenges: A comprehensive survey,'' \emph{Asian Journal of Control},
  vol.~25, no.~1, pp. 241--261, 2023. [Online]. Available:
  \url{https://doi.org/10.1002/asjc.2765}
\BIBentrySTDinterwordspacing

\bibitem{hu_model-free_2021}
\BIBentryALTinterwordspacing
Y.~Hu, Y.~Geng, B.~Wu, and D.~Wang, ``Model-free prescribed performance control
  for spacecraft attitude tracking,'' \emph{IEEE. T. Contr. Syst. T}, vol.~29,
  no.~1, pp. 165--179, 2020. [Online]. Available:
  \url{https://doi.org/10.1109/TCST.2020.2968868}
\BIBentrySTDinterwordspacing

\bibitem{luo_low-complexity_2018}
\BIBentryALTinterwordspacing
J.~Luo, Z.~Yin, C.~Wei, and J.~Yuan, ``Low-complexity prescribed performance
  control for spacecraft attitude stabilization and tracking,'' \emph{Aerosp.
  Sci. Technol}, vol.~74, pp. 173--183, 2018. [Online]. Available:
  \url{https://doi.org/10.1016/j.ast.2018.01.002}
\BIBentrySTDinterwordspacing

\bibitem{bu_2018_new}
\BIBentryALTinterwordspacing
X.~Bu, G.~He, and D.~Wei, ``A new prescribed performance control approach for
  uncertain nonlinear dynamic systems via back-stepping,'' \emph{J. Franklin.
  I}, vol. 355, pp. 8510--8536, 2018. [Online]. Available:
  \url{https://doi.org/10.1016/j.jfranklin.2018.09.001}
\BIBentrySTDinterwordspacing

\bibitem{huang_fault-tolerant_2020}
\BIBentryALTinterwordspacing
X.~Huang and G.~Duan, ``Fault-tolerant attitude tracking control of combined
  spacecraft with reaction wheels under prescribed performance,'' \emph{ISA.
  T}, vol.~98, pp. 161--172, 2020. [Online]. Available:
  \url{https;//doi.org/10.1016/j.isatra.2019.08.041}
\BIBentrySTDinterwordspacing

\bibitem{wei_learning-based_2019}
\BIBentryALTinterwordspacing
C.~Wei, J.~Luo, H.~Dai, and G.~Duan, ``Learning-based adaptive attitude control
  of spacecraft formation with guaranteed prescribed performance,'' \emph{IEEE.
  T. Cybernetics}, vol.~49, no.~11, pp. 4004--4016, 2018. [Online]. Available:
  \url{https://doi.org/10.1109/TCYB.2018.2857400}
\BIBentrySTDinterwordspacing

\bibitem{hu2017adaptive}
\BIBentryALTinterwordspacing
Q.~Hu, X.~Shao, and L.~Guo, ``Adaptive fault-tolerant attitude tracking control
  of spacecraft with prescribed performance,'' \emph{IEEE/ASME Transactions On
  Mechatronics}, vol.~23, no.~1, pp. 331--341, 2017. [Online]. Available:
  \url{https://doi.org/10.1109/TMECH.2017.2775626}
\BIBentrySTDinterwordspacing

\bibitem{gao_novel_2021}
\BIBentryALTinterwordspacing
S.~Gao, X.~Liu, Y.~Jing, and G.~M. Dimirovski, ``A novel finite-time prescribed
  performance control scheme for spacecraft attitude tracking,'' \emph{Aerosp.
  Sci. Technol}, vol. 118, p. 107044, 2021. [Online]. Available:
  \url{https://doi.org/10.1016/j.ast.2021.107044}
\BIBentrySTDinterwordspacing

\bibitem{yin_appointed_2019}
\BIBentryALTinterwordspacing
Z.~Yin, A.~Suleman, J.~Luo, and C.~Wei, ``Appointed-time prescribed performance
  attitude tracking control via double performance functions,'' \emph{Aerosp.
  Sci. Technol}, vol.~93, p. 105337, 2019. [Online]. Available:
  \url{https://doi.org/10.1016/j.ast.2019.105337}
\BIBentrySTDinterwordspacing

\bibitem{lei2023singularity}
\BIBentryALTinterwordspacing
J.~Lei, T.~Meng, W.~Wang, H.~Li, and Z.~Jin, ``Singularity-avoidance prescribed
  performance control for spacecraft attitude tracking,'' \emph{IEEE
  Transactions on Aerospace and Electronic Systems}, vol.~59, no.~5, pp.
  5405--25\,421, 2023. [Online]. Available:
  \url{https://doi.org/10.1109/TAES.2023.3258928}
\BIBentrySTDinterwordspacing

\bibitem{yong2020flexible}
\BIBentryALTinterwordspacing
K.~Yong, M.~Chen, Y.~Shi, and Q.~Wu, ``Flexible performance-based robust
  control for a class of nonlinear systems with input saturation,''
  \emph{Automatica}, vol. 122, 2020. [Online]. Available:
  \url{https://doi.org/10.1016/j.automatica.2020.109268}
\BIBentrySTDinterwordspacing

\bibitem{wang2022finite}
\BIBentryALTinterwordspacing
K.~Wang, T.~Meng, W.~Wang, R.~Song, and Z.~Jin, ``Finite-time extended state
  observer based prescribed performance fault tolerance control for spacecraft
  proximity operations,'' \emph{Advances in Space Research}, vol.~70, no.~5,
  pp. 1270--1284, 2022. [Online]. Available:
  \url{https://doi.org/10.1016/j.asr.2022.05.072}
\BIBentrySTDinterwordspacing

\bibitem{ames2019control}
\BIBentryALTinterwordspacing
A.~D. Ames, S.~Coogan, M.~Egerstedt, G.~Notomista, K.~Sreenath, and P.~Tabuada,
  ``Control barrier functions: Theory and applications,'' in \emph{2019 18th
  European control conference (ECC)}, 2019, pp. 3420--3431. [Online].
  Available: \url{https://doi.org/10.23919/ECC.2019.8796030}
\BIBentrySTDinterwordspacing

\bibitem{ames2016control}
\BIBentryALTinterwordspacing
A.~D. Ames, X.~Xu, J.~W. Grizzle, and P.~Tabuada, ``Control barrier function
  based quadratic programs for safety critical systems,'' \emph{IEEE
  Transactions on Automatic Control}, vol.~62, no.~8, pp. 3861--3876, 2016.
  [Online]. Available: \url{https://doi.org/10.1109/TAC.2016.2638961}
\BIBentrySTDinterwordspacing

\bibitem{xiao2019control}
\BIBentryALTinterwordspacing
W.~Xiao and C.~Belta, ``Control barrier functions for systems with high
  relative degree,'' in \emph{2019 IEEE 58th conference on decision and control
  (CDC)}, 2019, pp. 474--479. [Online]. Available:
  \url{https://doi.org/10.1109/CDC40024.2019.9029455}
\BIBentrySTDinterwordspacing

\bibitem{nagumo1942lage}
\BIBentryALTinterwordspacing
M.~Nagumo, ``{\"U}ber die lage der integralkurven gew{\"o}hnlicher
  differentialgleichungen,'' \emph{Proceedings of the Physico-Mathematical
  Society of Japan. 3rd Series}, vol.~24, pp. 551--559, 1942. [Online].
  Available: \url{https://doi.org/10.11429/ppmsj1919.24.0_551}
\BIBentrySTDinterwordspacing

\bibitem{blanchini1999set}
\BIBentryALTinterwordspacing
F.~Blanchini, ``Set invariance in control,'' \emph{Automatica}, vol.~35,
  no.~11, pp. 1747--1767, 1999. [Online]. Available:
  \url{https://doi.org/10.1016/S0005-1098(99)00113-2}
\BIBentrySTDinterwordspacing

\bibitem{horvath2016invariance}
Z.~Horv{\'a}th, Y.~Song, and T.~Terlaky, \emph{Invariance conditions for
  nonlinear dynamical systems}, 2016.

\bibitem{ames2014control}
\BIBentryALTinterwordspacing
A.~D. Ames, J.~W. Grizzle, and P.~Tabuada, ``Control barrier function based
  quadratic programs with application to adaptive cruise control,'' in
  \emph{53rd IEEE Conference on Decision and Control}, 2014, pp. 6271--6278.
  [Online]. Available: \url{https://doi.org/10.1109/CDC.2014.7040372}
\BIBentrySTDinterwordspacing

\bibitem{rauscher2016constrained}
\BIBentryALTinterwordspacing
M.~Rauscher, M.~Kimmel, and S.~Hirche, ``Constrained robot control using
  control barrier functions,'' in \emph{2016 IEEE/RSJ International Conference
  on Intelligent Robots and Systems (IROS)}, 2016, pp. 279--285. [Online].
  Available: \url{https://doi.org/10.1109/IROS.2016.7759067}
\BIBentrySTDinterwordspacing

\bibitem{ferraguti2022safety}
\BIBentryALTinterwordspacing
F.~Ferraguti, C.~T. Landi, A.~Singletary, H.-C. Lin, A.~Ames, C.~Secchi, and
  M.~Bonf{\`e}, ``Safety and efficiency in robotics: The control barrier
  functions approach,'' \emph{IEEE Robotics \& Automation Magazine}, vol.~29,
  no.~3, pp. 139--151, 2022. [Online]. Available:
  \url{https://doi.org/10.1109/MRA.2022.3174699}
\BIBentrySTDinterwordspacing

\bibitem{alan2023control}
\BIBentryALTinterwordspacing
A.~Alan, A.~J. Taylor, C.~R. He, A.~D. Ames, and G.~Orosz, ``Control barrier
  functions and input-to-state safety with application to automated vehicles,''
  \emph{IEEE Transactions on Control Systems Technology}, 2023. [Online].
  Available: \url{https://doi.org/10.1109/TCST.2023.3286090}
\BIBentrySTDinterwordspacing

\bibitem{breeden2021guaranteed}
\BIBentryALTinterwordspacing
J.~Breeden and D.~Panagou, ``Guaranteed safe spacecraft docking with control
  barrier functions,'' \emph{IEEE Control Systems Letters}, vol.~6, pp.
  2000--2005, 2021. [Online]. Available:
  \url{https://doi.org/10.1109/LCSYS.2021.3136813}
\BIBentrySTDinterwordspacing

\bibitem{breeden2023robust}
\BIBentryALTinterwordspacing
------, ``Robust control barrier functions under high relative degree and input
  constraints for satellite trajectories,'' \emph{Automatica}, vol. 155, p.
  111109, 2023. [Online]. Available:
  \url{https://doi.org/10.1016/j.automatica.2023.111109}
\BIBentrySTDinterwordspacing

\bibitem{verginis2022funnel}
\BIBentryALTinterwordspacing
C.~K. Verginis, ``Funnel control for uncertain nonlinear systems via zeroing
  control barrier functions,'' \emph{IEEE Control Systems Letters}, vol.~7, pp.
  853--858, 2022. [Online]. Available:
  \url{https://doi.org/10.1109/LCSYS.2022.3227514}
\BIBentrySTDinterwordspacing

\bibitem{li2016adaptive}
\BIBentryALTinterwordspacing
M.~Li, M.~Hou, and C.~Yin, ``Adaptive attitude stabilization control design for
  spacecraft under physical limitations,'' \emph{Journal of guidance, control,
  and dynamics}, vol.~39, no.~9, pp. 2179--2183, 2016. [Online]. Available:
  \url{https://doi.org/10.2514/1.G000348}
\BIBentrySTDinterwordspacing

\bibitem{wertz_2012_spacecraft}
J.~R. Wertz, \emph{Spacecraft attitude determination and control}.\hskip 1em
  plus 0.5em minus 0.4em\relax Springer Science \& Business Media, 2012,
  vol.~73.

\bibitem{sontag1989universal}
\BIBentryALTinterwordspacing
E.~D. Sontag, ``A ‘universal’construction of artstein's theorem on
  nonlinear stabilization,'' \emph{Systems \& control letters}, vol.~13, no.~2,
  pp. 117--123, 1989. [Online]. Available:
  \url{https://doi.org/10.1016/0167-6911(89)90028-5}
\BIBentrySTDinterwordspacing

\bibitem{aubin2011viability}
J.-P. Aubin, A.~M. Bayen, and P.~Saint-Pierre, \emph{Viability theory: new
  directions}.\hskip 1em plus 0.5em minus 0.4em\relax Springer Science \&
  Business Media, 2011.

\bibitem{khalil2002control}
H.~K. Khalil, \emph{Control of nonlinear systems}.\hskip 1em plus 0.5em minus
  0.4em\relax Prentice Hall, New York, NY, 2002.

\bibitem{cohen2023characterizing}
\BIBentryALTinterwordspacing
M.~H. Cohen, P.~Ong, G.~Bahati, and A.~D. Ames, ``Characterizing smooth safety
  filters via the implicit function theorem,'' \emph{IEEE Control Systems
  Letters}, 2023. [Online]. Available:
  \url{https://doi.org/10.1109/LCSYS.2023.3341345}
\BIBentrySTDinterwordspacing

\bibitem{li2016extended}
\BIBentryALTinterwordspacing
B.~Li, Q.~Hu, and G.~Ma, ``Extended state observer based robust attitude
  control of spacecraft with input saturation,'' \emph{Aerospace Science and
  Technology}, vol.~50, pp. 173--182, 2016. [Online]. Available:
  \url{https://doi.org/10.1016/j.ast.2015.12.031}
\BIBentrySTDinterwordspacing

\bibitem{bechlioulis_adaptive_2009}
\BIBentryALTinterwordspacing
C.~P. Bechlioulis and G.~A. Rovithakis, ``Adaptive control with guaranteed
  transient and steady state tracking error bounds for strict feedback
  systems,'' \emph{Automatica}, vol.~45, no.~2, pp. 532--538, 2009. [Online].
  Available: \url{https://doi.org/10.1016/j.automatica.2008.08.012}
\BIBentrySTDinterwordspacing

\end{thebibliography}
\end{document}